\title{On the hardness of computing an average curve}
\author{Kevin Buchin}{Department of Mathematics and Computing Science, TU Eindhoven, The Netherlands}{k.a.buchin@tue.nl}{}{}
\author{Anne Driemel}{University of Bonn, Hausdorff Center for Mathematics, Bonn, Germany}{driemel@cs.uni-bonn.de}{}{}
\author{Martijn Struijs}{Department of Mathematics and Computing Science, TU Eindhoven, The Netherlands}{m.a.c.struijs@tue.nl}{}{}
\authorrunning{K. Buchin, A. Driemel and M. Struijs} 
\keywords{Curves, Clustering, Algorithms, Hardness, Approximation}
\newcommand{\rr}{\mathbb{R}}
\providecommand{\eps}{\varepsilon}
\DeclareMathOperator{\dtw}{DTW}
\newcommand*\frech{\mathop{}\!{\mathrm{d}_F}}
\newcommand*\dfrech{\mathop{}\!{\mathrm{d}_{dF}}}
\newcommand{\acount}{\#_A}
\newcommand{\bcount}{\#_B}
\newtheorem{problem}{Problem}
\begin{document}

\maketitle

\begin{abstract}
We study the complexity of clustering curves under $k$-median and $k$-center objectives in the metric space of the Fr\'echet distance and related distance measures. 
Building upon recent hardness results for the minimum-enclosing-ball problem under the Fr\'echet distance, we show that also the $1$-median problem is NP-hard.
Furthermore, we show that the $1$-median problem is W[1]-hard with the number of curves as parameter. We show this under the discrete and continuous Fr\'echet and Dynamic Time Warping (DTW) distance. This yields an independent proof of an earlier result by Bulteau et al.\ from 2018 for a variant of DTW that uses squared distances, where the new proof is both simpler and more general.
On the positive side, we give approximation algorithms for problem variants where the center curve may have complexity at most $\ell$ under the discrete Fr\'echet distance. In particular, for fixed $k,\ell$ and $\eps$, we give $(1+\eps)$-approximation algorithms for the $(k,\ell)$-median and $(k,\ell)$-center objectives and a polynomial-time exact algorithm for the $(k,\ell)$-center objective.
\end{abstract}

\section{Introduction}
Clustering is an important tool in data analysis, used to split data into groups of similar objects. Their dissimilarity is often based on distance between points in Euclidean space. However, the dissimilarity of polygonal curves is more accurately measured by specialised measures: Dynamic Time Warping (DTW)~\cite{PGAveragingSteiner}, continuous and discrete Fr\'echet distance~\cite{AltG95,eiter1994computing}.

We focus on \emph{centroid-based clustering}, where each cluster has a center curve and the quality of the clustering is based on the similarity between the center and the elements inside the cluster.  
In particular, given a distance measure $\delta$, we consider the following problems:
\begin{problem}[\textbf{$k$-median for curves with distance $\delta$}]\label{prob:k-median}
	Given a set $\mathcal{G}=\{g_1,\ldots,g_m\}$ of polygonal curves, find a set $\mathcal{C}=\{c_1,\ldots c_k\}$ of polygonal curves that minimizes $\sum\limits_{g\in\mathcal{G}} \min_{i=1}^k \delta(c_i,g)$.
\end{problem}
\begin{problem}[\textbf{$k$-center for curves with distance $\delta$}]\label{prob:k-center}
	Given a set $\mathcal{G}=\{g_1,\ldots,g_m\}$ of polygonal curves, find a set $\mathcal{C}=\{c_1,\ldots c_k\}$ of polygonal curves that minimizes $\max\limits_{g\in\mathcal{G}} \min_{i=1}^k \delta(c_i,g)$.	
\end{problem}

For points in Euclidean space, the most widely-used centroid-based clustering problem is $k$-means, in which the distance measure $\delta$ is the squared Euclidean distance.
But also for general metric spaces the $k$-median problem is well studied, often in the context of the closely related facility location problem~\cite{JainMS02,JainV01,LiS16}. In general metric spaces usually, the \emph{discrete $k$-median problem} is studied, where the centers must be selected from a finite set $F$, and are called facilities. 

For clustering curves, limiting the possible centers to a finite set of `facilities' is unnecessarily restrictive.
In this paper, we are therefore interested in the \emph{unconstrained $k$-median problem}, where a center can be any element of the metric space (as in the case of $k$-means). Often, we will simply write $k$-median problem to denote the unconstrained version. 
In this paper, we are in particular interested in the complexity of the 1-median problem, which we refer to  as \emph{average curve problem}.  

\paragraph*{Hardness of the average curve problem}
While clustering on points for general $k$ in the plane or higher dimension is often NP-hard~\cite{MegiddoS84}, many point clustering problems can be solved efficiently when $k=1$ in low dimension.
For instance, the $1$-center problem in the plane can be solved in linear time~\cite{Megiddo83a}, and there are practical algorithms for higher dimensional Euclidean space~\cite{FischerGK03}.
In contrast, the $1$-center problem (i.e., the minimum enclosing ball problem) for curves under the discrete and continuous Fr\'echet distance is already NP-hard in 1D~\cite{KAJFrechetCenter}.

In this paper, we show that also the average curve problem, i.e. the 1-median problem, is NP-hard. We show this for the discrete and for the continuous Fr\'echet distance, and for the dynamic time-warping (DTW) distance. Variants of the DTW distance differ in the norm used for comparing pairs of points, and how that norm is used, see Section~\ref{subsec:prelim} for details. Our results apply to a large class of variants of DTW. For the frequently used variant of DTW using the squared Euclidean distance, Bulteau~et~al.~\cite{bulteau2018hardness} recently showed that the average curve problem is NP-hard and even W[1]-hard when parametrized in the number of input curves $m$ and there exists no $f(m)\cdot n^{o(m)}$-time algorithm unless the Exponential Time Hypothesis (ETH) fails\footnote{See e.g.~\cite{Cygan2015Parametrized} for background on parametrized complexity}. Because of its importance in time series clustering, there are many heuristics for the average curve problem under DTW~\cite{gupta1996nonlinear,HautamakiNF08,PGAveragingSteiner}. Brill et al.\ showed that dynamic programming yields an exponential-time exact algorithm~\cite{Brill2019} and additionally show the problem can be solved in polynomial time when both the input curves and center curve use only vertices from $\{0,1\}$. 

\paragraph*{Approximation algorithms}

Since both the $k$-center and the $k$-median problem for curves are already NP-hard for $k=1$ in 1D, we further study efficient approximation algorithms for these problems.

For approximation in metric spaces, the discrete and unconstrained $k$-median (likewise for $k$-center) are closely related: any set of curves that realises an $\alpha$-approximation for the discrete $k$-median problem realises an $2\alpha$-approximation for the unconstrained $k$-median problem. There is an elegant $O(kn)$ time $2$-approximation algorithm for the $k$-center problem in metric spaces~\cite{Gonzalez85}. This approximation factor is tight for clustering curves under the discrete~\cite{KAJFrechetCenter} and continuous~\cite{Struijs2018Curve} Fr\'echet distance. Finding approximate solutions for $k$-median is more challenging: the best known polynomial-time approximation algorithm for discrete $k$-median in general metric space achieves a factor of $3+\eps$ for any $\eps>0$~\cite{AryaGKMMP04} and it is NP-hard to achieve an approximation factor of $1+2/e$~\cite{JainMS02}. 

Unconstrained clustering of curves may result in centers of high complexity. To avoid overfitting and to obtain a compact representation of the data, 
we look at a variant of the clustering problems with center curves of at most a fixed complexity, denoted by $\ell$. More formally, the \emph{$(k,\ell)$-center problem} is to find a set of curves $\mathcal{C}=\{c_1,\ldots c_k\}$, each of complexity at most $\ell$, that minimizes $\max_{g\in\mathcal{G}} \min_{i=1}^k \delta(c_i,g)$. The \emph{$(k,\ell)$-median problem} is defined analogously. 
Although the general case for this variant is still NP-hard, we can find efficient algorithms when $k$ and $\ell$ are fixed. The $(k,\ell)$-center and $(k,\ell)$-median problems were introduced by Driemel et~al.~\cite{driemel2016clustering}, who obtained an $\widetilde{O}(mn)$-time $(1+\eps)$-approximation algorithm for the $(k,\ell)$-center and $(k,\ell)$-median problem under the Fr\'echet distance for curves in 1D, assuming $k,\ell,\eps$ are constant. 
In~\cite{KAJFrechetCenter}, Buchin et~al.\ gave polynomial-time constant-factor approximation algorithms for the $(k,\ell)$-center problem under the discrete and continuous Fr\'echet distance for curves in arbitrary dimension. These approximation algorithms have lead to efficient implementations of heuristics for the center version showing that the considered clustering formulations are useful in practice~\cite{BDLNklcluster19}, see Figure~\ref{fig:klcluster}. This encourages further study of the median variants of the problem.

\begin{figure}
	\centering
	\includegraphics[scale=0.1]{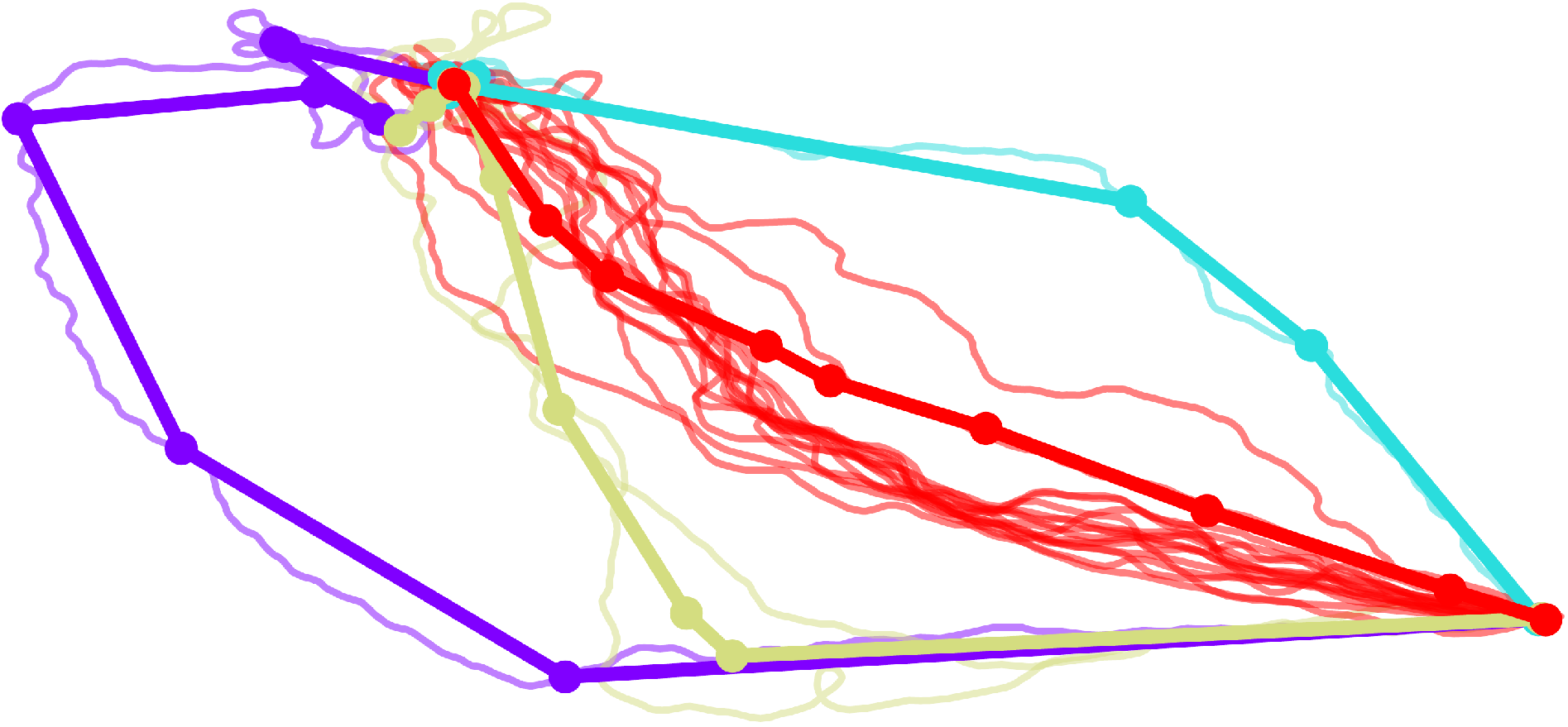}
	\caption{\label{fig:klcluster} $(k,\ell)$-center clustering of pigeon flight paths computed by the algorithm of~\cite{BDLNklcluster19}.}
\end{figure}

\subsection{Definitions of distance measures}\label{subsec:prelim}
Let $x$ be a polygonal curve, defined by a sequence of vertices $x_1,\ldots,x_n$ from $\rr^d$ where consecutive vertices are connected by straight line segments. We call the number of vertices of $x$ the \emph{complexity}, denoted by $|x|$.  Given a pair of polygonal curves $x,y$, a \emph{warping path} between them is a sequence $W =\langle w_1,\ldots, w_L\rangle$ of index pairs $w_l=(i_l,j_l)$ from $\{1,\ldots, |x|\}\times \{1,\ldots, |y|\}$ such that $w_1 = (1,1)$, $w_L = (|x|,|y|)$, and $(i_{l+1}-i_l,j_{l+1}-j_l) \in \{(0,1),(1,0),(1,1)\}$ for all $1\leq l <L$. We say two vertices $x_i$, $y_i$ are \emph{matched} if $(i,j)\in W$. 

Denote the set of all warping paths between curves $x$ and $y$ by $\mathcal{W}_{x,y}$. For any integers $p,q\geq 1$, we define the Dynamic Time Warping Distance between $x$ and $y$ as
\[\dtw_p^q(x,y):= \left(\min_{W\in \mathcal{W}_{x,y}} \sum_{(i,j)\in W} \|x_i - y_j\|^p\right)^{q/p},\] where $\|\cdot\|$ denotes the Euclidean norm. In text, we refer to $\dtw_p^q$ also as $(p,q)$-DTW. Similarly, define the discrete Fr\'echet distance between $x,y$ as 
\[ \dfrech(x,y):= \min_{W\in \mathcal{W}_{x,y}} \max_{(i,j)\in W} \|x_i - y_j\|.\]

The \emph{continuous} Fr\'echet distance is defined with a reparametrization $f:[0,1]\rightarrow [0,1]$, which is a continuous injective function with $f(0)=0$ and $f(1)=1$. We say two points on $x$ and $y$ are \emph{matched} if $f(i) = j$. Denote the set of all reparametrizations by $\mathcal{F}$, then the continuous Fr\'echet distance is given by 
\[ \frech(x,y):= \inf_{f\in \mathcal{F}} \max_{\alpha\in [0,1]} \|x(f(\alpha)) - y(\alpha)\|.\]

\begin{table}[t]
	\centering
	\caption{Overview of results. In these tables, $n$ denotes the length of the input curves, $m$ denotes the number of input curves and $d$ denotes the ambient dimension of the curves. }\label{tab:overview}
	\subcaption{Results on exact computation.}
	\begin{tabular}{|c|c|c|c|}
		\hline
		Problem & Result & Restrictions & Reference  \\
		\hline\hline
		\multirow{6}{0.22\textwidth}{1-median, DTW$^q_p$}
		&$O(n^{2m+1} 2^{m} m)$ & $d=1$ & \multirow{2}{*}{Brill et al.~\cite{Brill2019}}\\
		&$O(mn^3)$ & Binary & \\
		\cline{2-4}
		&NP-hard & \multirow{2}{*}{$p=q=2$} & \multirow{2}{*}{Bulteau et al.~\cite{bulteau2018hardness}}\\
		&W[1]-hard in $m$ & & \\
		\cline{2-4}
		&NP-hard &  \multirow{2}{*}{$p,q \in \mathbb{N}$} & \multirow{2}{*}{Theorem~\ref{thm:DTW-hard}}\\
		&W[1]-hard in $m$ & & \\
		\hline\hline
		\multirow{2}{0.22\textwidth}{1-median, Fr\'echet} 
		& NP-hard  & & \multirow{2}{*}{Theorem~\ref{thm:Frechet-hard}}  \\
		& W[1]-hard in $m$ & &  \\
		\hline\hline
		\multirow{2}{0.22\textwidth}{$1$-center, discrete~Fr\'echet}
		& \multirow{2}{*}{NP-hard} & & \multirow{2}{*}{Buchin et al.~\cite{KAJFrechetCenter}}\\
		& & & \\
		\hline
		\multirow{2}{0.22\textwidth}{$(k,\ell)$-center, discrete~Fr\'echet}
		& \multirow{2}{*}{$O((mn)^{2k\ell}k\ell m \log (mn))$ } 
		& \multirow{2}{*}{$d \leq 2$} 
		& \multirow{2}{*}{Theorem~\ref{thm:Frechet-exact}}\\
		& & & \\
		\hline
	\end{tabular}\bigskip
	\subcaption{Approximation algorithms. (In stating the running times  we assume $k$, $\ell$, and $\eps$ are constants independent of $n$ and $m$.) }
	\begin{tabular}{|c|c|c|c|c|}
		\hline
		Problem & Result & Approximation factor & Restrictions & Reference  \\ 
		\hline\hline 
		\multirow{2}{0.22\textwidth}{$(k,\ell)$-median, continuous Fr\'echet }
		&\multirow{2}{*}{$\widetilde{O}(nm)$} 
		&\multirow{2}{*}{$(1+\eps)$} 
		&\multirow{2}{*}{$d=1$}
		&\multirow{2}{*}{Driemel et al.~\cite{driemel2016clustering}} \\
		& & & & \\
		\hline
		\multirow{4}{0.22\textwidth}{$(k,\ell)$-median, discrete~Fr\'echet}
		& $\widetilde{O}(nm)$ & $65$ & & Driemel et al.~\cite{driemel2016clustering} \\
		& $\widetilde{O}(m^2(m+n))$ & $12$ & & Theorem~\ref{thm:12-approx-k,l-median} \\
		& $\widetilde{O}(nm)$ & $(1+\eps)$ & $k=1$ & Theorem~\ref{thm:Frechet-median-approx} \\
		& $\widetilde{O}(nm^{dk\ell+1})$ & $(1+\eps)$ & $k>1$ & Theorem~\ref{thm:Frechet-median-approx} \\
		\hline\hline 
		\multirow{2}{0.22\textwidth}{$(k,\ell)$-center, discrete~Fr\'echet} 
		& $\widetilde{O}(nm)$ & 3 & & Buchin et al.~\cite{KAJFrechetCenter}\\
		&{$\widetilde{O}(nm)$} &$(1+\eps)$ 
		& & Theorem~\ref{thm:Frechet-center-approx}\\
		\hline
	\end{tabular}
\end{table}

\subsection{Results}
We show that the average curve problem for discrete and continuous Fr\'echet distance in 1D is NP-complete, W[1]-hard when parametrized in the number of curves $m$, and admits no $f(m)\cdot n^{o(m)}$-time algorithm unless ETH fails.
In addition, we prove the same hardness results of the average curve problem for the $(p,q)$-DTW distance for any $p,q\in \mathbb{N}$. 

This is an independent proof that is simpler and more general than the result by Bulteau~et~al.~\cite{bulteau2018hardness}. Their hardness result holds for the case of the $(2,2)$-DTW distance, which is widely-used. Other common variants, covered by our proof, are $(1,1)$-DTW, i.e., (non-squared) Euclidean distance and Manhattan distance in 1D~\cite{JSSv031i07}, $(2,1)$-DTW, and more generally $(p,1)$-DTW~\cite{dtwclust-17,RJ-2019-023}. Note that, while we define $(p,1)$-DTW in terms of the $p$th power of the Euclidean norm, our hardness results also apply to the $p$th power of the $L_p$-norm, since these norms are equal in 1D.

Brill~et~al.~\cite{Brill2019} asked whether their result can be extended to obtain a polynomial time algorithm when all curves are restricted to sets of $3$ vertices. In our NP-hardness construction, both the input curves and the center curve use only vertices from $\{-1,0,1\}$, so we answer this question in the negative, unless $\textsc{P}=\textsc{NP}$. (The hardness construction by Bulteau~et~al.\ uses a center curve that is not restricted to a bounded set of vertices, and therefore does not resolve this question)

Since our and other hardness results exclude efficient algorithms for the $(k,\ell)$-center or -median clustering without further assumptions, we investigate other approaches with provable guarantees. In particular, we give a $(1+\eps)$-approximation algorithm that runs in $\widetilde{O}(mn)$ time and a polynomial-time exact algorithm to solve the $(k,\ell)$-center problem for the discrete Fr\'echet distance, when $k,\ell$, and $\eps$ are fixed. For the $(k,\ell)$-median problem under the discrete Fr\'echet distance, we give a polynomial time $12$-approximation algorithm, and an $(1+\eps)$-approximation algorithm that runs in polynomial time when $k,\ell$, and $\eps$ are fixed.
Table~\ref{tab:overview} gives an overview of our results.
\section{Hardness of the average curve problem for discrete and continuous Fr\'echet}

In this section, we will show that the $1$-median problem (or average curve problem) is NP-hard for the discrete and continuous Fr\'echet distance. The average curve problem for the discrete Fr\'echet distance is as follows: given a set of curves $\mathcal{G}$  and an integer $r$, determine whether there exists a center curve $c$ such that $\sum_{g\in\mathcal{G}}\dfrech(c,g)\leq r$. We will show that this problem is NP-hard. To find a reasonable algorithm, we can look at a parametrized version of the problem. A natural parameter is the number of input curves, which we will denote by $m$. However, we will show that this parametrized problem is W[1]-hard, which rules out any $f(m)\cdot n^{O(1)}$-time algorithm, unless $\textsc{FPT}=\textsc{W}[1]$. To achieve these reductions, we create a reduction from a variant of the shortest common supersequence (SCS) problem. 

\subsection{The FCCS problem}

To show the hardness of the average curve problem for the Fr\'echet and DTW distance, we reduce from a variant of the \emph{Shortest Common Supersequence} (SCS) problem, which we will call the \emph{Fixed Character Common Supersequence} (FCCS) problem. If $s$ is a string and $x$ is a character, $\#_x(s)$ denotes the number of occurrences of $x$ in $s$. 

\begin{problem}[\textbf{Shortest Common Supersequence (SCS)}] Given a set $S$ of $m$ strings with length at most $n$ over the alphabet $\Sigma$ and an integer $t$, does there exists a string $s^*$ of length $t$ that is a supersequence of each string $s\in S$?
\end{problem}
 
\begin{problem}[\textbf{Fixed Character Common Supersequence (FCCS)}] Given a set $S$ of $m$ strings with length at most $n$ over the alphabet $\Sigma=\{A,B\}$ and $i,j\in\mathbb{N}$, does there exists a string $s^*$ with $\acount(s^*)=i$ and $\bcount(s^*)=j$ that is a supersequence of each string $s\in S$?
\end{problem}

The SCS problem with a binary alphabet is known to be NP-hard~\cite{RAIHA1981187} and $W[1]$-hard~\cite{pietrzak2003parameterized}. The same holds for our variant:  

\begin{lemma}\label{lem:SCS-prime}
	The FCCS problem is NP-hard. The FCCS problem with $m$ as parameter is W[1]-hard. There exists no $f(m)\cdot n^{o(m)}$ time algorithm for FCCS unless ETH fails. 
\end{lemma}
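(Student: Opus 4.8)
The plan is to reduce from the \emph{binary} SCS problem, which the cited results (and the \textsc{Clique}-based reduction underlying them) establish to be NP-hard, W[1]-hard with $m$ as parameter, and not solvable in $f(m)\cdot n^{o(m)}$ time unless ETH fails. The only structural difference between the two problems is how the ``budget'' is imposed: SCS caps the \emph{total length} of the supersequence by $t$, whereas FCCS fixes the number $i$ of $A$'s and the number $j$ of $B$'s separately. The key observation is that a common supersequence of length exactly $t$ has some split $(\#_A,\#_B)=(a,\,t-a)$, so a single SCS instance corresponds to the \emph{disjunction} of FCCS instances over all feasible splits.

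Concretely, first I would note that we may assume $t\le mn$: the concatenation $s_1\cdots s_m$ is always a common supersequence of length at most $mn$, so if $t\ge mn$ the SCS instance is trivially positive and maps to a fixed positive FCCS instance. For $t<mn$ I would produce, for each $a\in\{0,1,\dots,t\}$, the FCCS instance $(S,\,a,\,t-a)$ (the same string set $S$, only the target counts change). Correctness is a short padding argument: appending a character to a common supersequence keeps it one, so $S$ has a common supersequence of length at most $t$ iff it has one of length exactly $t$; and a length-$t$ supersequence realizes exactly one split $(a,t-a)$. Hence the SCS instance is positive iff at least one of the $t+1\le mn$ instances $(S,a,t-a)$ is a positive FCCS instance.

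This is an \emph{fpt Turing (disjunctive) reduction} producing $O(mn)$ FCCS instances, each of size $O(mn)$ and each with the \emph{same} parameter value $m$. All three statements then follow. For NP-hardness this is a polynomial-time Turing (Cook) reduction, so a polynomial-time FCCS algorithm would solve binary SCS in polynomial time. For W[1]-hardness, a putative FPT algorithm for FCCS running in $g(m)\cdot (mn)^{O(1)}$ time, invoked on $O(mn)$ instances all of parameter $m$, would decide SCS in $g(m)\cdot (mn)^{O(1)}$ time, i.e.\ in FPT time, contradicting the W[1]-hardness of SCS unless $\textsc{FPT}=\textsc{W}[1]$. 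For the ETH bound, a hypothetical $f(m)\cdot n^{o(m)}$ algorithm for FCCS would yield an SCS algorithm running in $O(mn)\cdot f(m)\cdot n^{o(m)}=f'(m)\cdot n^{1+o(m)}=f'(m)\cdot n^{o(m)}$ time, since $1+o(m)=o(m)$, contradicting the ETH lower bound for binary SCS.

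The main obstacle I expect is not the construction but justifying that the three hardness notions transfer cleanly through what is a \emph{disjunctive} rather than a many-one reduction: I would verify that the number of oracle calls stays polynomially bounded (guaranteed by $t\le mn$), that the parameter $m$ is identical across all calls (so the FPT/W[1] machinery applies verbatim), and that multiplying the running time by the polynomial number of calls keeps the exponent in the $n^{o(m)}$ regime (which holds because $1+o(m)=o(m)$). A secondary, minor point is the padding equivalence between the ``length at most $t$'' and ``length exactly $t$'' formulations of SCS, which is what licenses the enumeration over splits.
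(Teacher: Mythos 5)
Your construction is correct, but it takes a genuinely different route from the paper, and the difference matters for one of the three claims. The paper gives a polynomial-time \emph{many-one} reduction: from an SCS instance $(S,t)$ it builds $S' = \{\, s + AB^{2t}A + c(s) \mid s \in S\,\}$, where $c(s)$ swaps $A$'s and $B$'s, and asks for a common supersequence with exactly $t+2$ $A$'s and $3t$ $B$'s. The complement-padding is precisely a device for collapsing your disjunction over splits into a \emph{single} instance: since $\#_A(s)+\#_A(c(s)) = |s|$, fixing the character counts simulates the budget ``total length at most $t$,'' and in the backward direction the $B$-count forces the $AB^{2t}A$ separators to align, so any solution decomposes as $q_1+q_2+q_3$ with both $q_1$ and $c(q_3)$ common supersequences of $S$ and $|q_1|+|c(q_3)|\le 2t$, whence one of them has length at most $t$. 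Your route instead enumerates all $t+1 \le mn+1$ splits $(S,a,t-a)$ and makes that many oracle calls; your supporting details are sound --- the padding equivalence between ``length at most $t$'' and ``length exactly $t$,'' the bound $t\le mn$, the unchanged parameter $m$ in every call, and the ETH arithmetic ($1+o(m)=o(m)$, polynomially many calls) all check out, and your premise that binary SCS has the $f(m)\cdot n^{o(m)}$ ETH lower bound is justified by the same Clique-based reduction of Pietrzak that the paper invokes.

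The one substantive shortfall is the reduction type. A disjunctive reduction is a Turing reduction, not a many-one reduction. For NP-hardness this is defensible (Garey--Johnson define NP-hardness via polynomial-time Turing reductions), and for the ETH claim and for the consequence ``no FPT algorithm unless $\textsc{FPT}=\textsc{W}[1]$'' Turing reductions suffice, exactly as you argue. But W[1]-hardness in the standard sense --- including in the reference the paper itself relies on (Cygan et al.) --- is defined via parameterized \emph{many-one} reductions, and your argument establishes only hardness under fpt Turing reductions, a strictly weaker statement. This weakening also propagates: the paper composes this lemma with further many-one reductions to obtain the W[1]-hardness claims of Theorems~\ref{thm:Frechet-hard} and~\ref{thm:DTW-hard}, and with your proof those compositions would likewise be only Turing reductions. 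If the literal W[1]-hardness statement is wanted, a many-one reduction is needed, and that is exactly what the paper's complementation gadget buys at the price of a more delicate backward analysis; your approach buys a shorter, more transparent correctness argument at the price of delivering only the algorithmic consequences rather than the hardness statement verbatim.
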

\begin{proof}
We reduce from SCS with the binary alphabet $\{A,B\}$ to FCCS. Given an instance $(S,t)$ of SCS, construct $S' = \{s+AB^{2t}A+c(s)\mid s\in S\}$, where $c(s)$ denotes the string constructed by replacing all A characters in $s$ by B and vice versa, and $+$ denotes string concatenation.  
We reduce to the instance $(S',t+2,3t)$ of FCCS and claim that $(S,t)$ is a true instance of SCS if and only if $(S',t+2,3t)$ is a true instance of FCCS. 

If $(S,t)$ is a true instance of SCS, then there exists a string $q$ of length $t$ that is a supersequence of each string in $S$. Therefore, the string $q'= q+ AB^{2t}A+ c(q)$ is a supersequence of all strings in $S'$. Since $\acount(q') = 2 + \acount(q+c(q)) = 2+t$ and $\bcount(q') = 2t + \bcount(q+c(q)) = 3t$, $(S', t+2,3t)$ is a true instance of FCCS.

If $(S',t+2,3t)$ is a true instance of FCCS, there is string $q'$ with $\acount(q')=t+2$ and $\bcount(q')=3t$ that is a supersequence of each string $s'\in S'$. Consider a pair of strings $s_1' = s_1 + AB^{2t}A + c(s_1)$ and $s_2' = s_2 + AB^{2t}A + c(s_2)$ from $S'$. If there is no matching such that the first character of the $AB^{2t}A$ substring in $s_1'$ is matched to the same character of $q'$ as the first character of that substring in $s_1'$, then $q'$ is a supersequence of $AB^{2t}AB^{2t}A$ and so $\bcount(q')> 3t$, a contradiction.
By symmetry, the same holds for the last character of the substring $AB^{2t}A$ and therefore $q=q_1+q_2+q_3$, where $q_1$ is a supersequence of $S$, $q_2$ is a supersequence of $AB^{2t}A$ and $q_3$ is a supersequence of $\{c(s)\mid s\in S\}$. Note that $c(q_3)$ is a supersequence of $S$. Also, $\acount(q_1)+\acount(c(q_3))= \acount(q)- \acount(q_2) \leq t$ and $\bcount(q_1)+\bcount(c(q_3)) = \bcount(q)-\bcount(q_2)\leq t$. So, $|q_1|+|c(q_3)|\leq 2t$, which means that $|q_1|\leq t$ or $|c(q_3)|\leq t$ and thus $(S,t)$ is a true instance of SCS.

Note that this reduction is both a polynomial-time reduction and a parametrized reduction in the parameter $m$. Since the SCS problem over the binary alphabet $\{A,B\}$ is NP-hard~\cite{RAIHA1981187} and W[1]-hard when parametrized with the number of strings $m$~\cite{pietrzak2003parameterized}, the first two parts of the claim follow. The final part of the claim follows from the fact that this reduction 

Together with the reduction from~\cite{pietrzak2003parameterized}, we have a parametrized reduction from \textsc{Clique} with a linear bound on the parameter, so the final part of the claim follows~\cite[Obs. 14.22]{Cygan2015Parametrized}.
\end{proof}

\subsection{Complexity of the average curve problem under the discrete and continuous Fr\'echet distance}
We will show the hardness of finding the average curve under the discrete and continuous Fr\'echet distance via the following reduction from FCCS. Given an instance $(S,i,j)$ of FCCS, we construct a set of curves using the following vertices in $\rr$: $g_a=-1$, $g_b=1$, $g_A=-3$, and $g_B=3$. For each string $s\in S$, we map each character to a subcurve in $\rr$:
\[	A\rightarrow (g_ag_b)^{i+j}g_A(g_bg_a)^{i+j} \qquad B\rightarrow (g_bg_a)^{i+j}g_B(g_ag_b)^{i+j}.\]
The curve $\gamma(s)$ is constructed by concatenating the subcurves resulting from this mapping, $G=\{\gamma(s)\mid s\in S\}$ denotes the set of these curves. Additionally, we use the curves
\[A_i= g_b(g_Ag_b)^i\qquad B_j = g_a(g_Bg_a)^j.\]

We will call subcurves containing only $g_A$ or $g_B$ vertices \emph{letter gadgets} and subcurves containing only $g_a$ or $g_b$ vertices \emph{buffer gadgets}.  Let $R_{i,j}=\{A_i, B_j\}$. We reduce to the instance $(G\cup R_{i,j},r)$ of the average curve problem, where $r=|S|+2$. We use the same construction for the discrete and continuous case.

For an example of this construction, take $S=\{ABB,BBA,ABA\}$, $i=2$, $j=2$. Then $ABBA$ is a supersequence of $S$ with the correct number of characters. Note that the curve with vertices $0g_A0g_B0g_B0g_A0$ has a (discrete) Fr\'echet distance of at most $1$ to the curves in $G\cup R_{i,j}$, see Figure~\ref{fig:frechet-median-hardness}, so the sum of those distances is at most $|S|+2=r$.

\begin{figure}[p]
	\centering
	\includegraphics[scale=0.7,page=2]{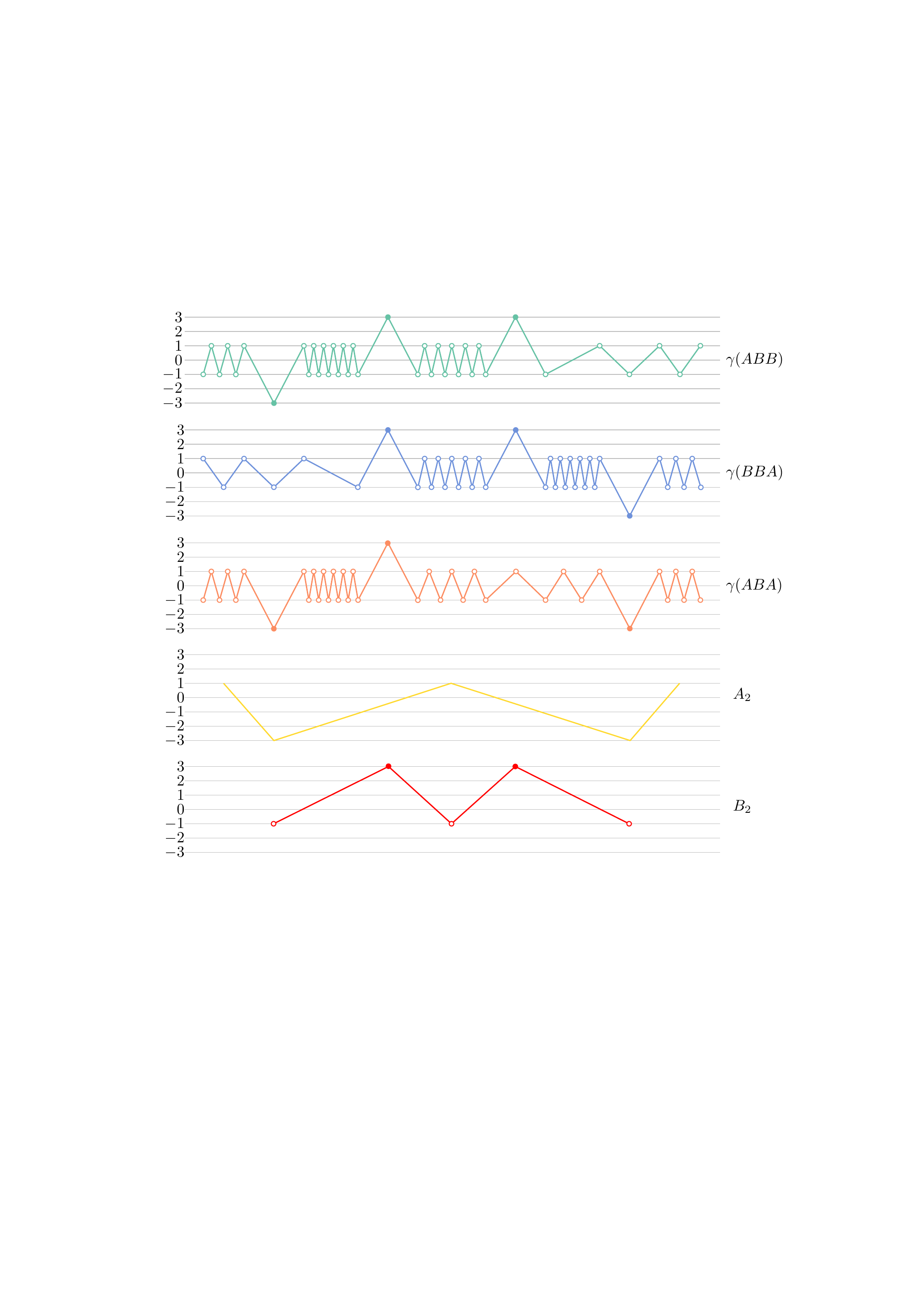}
	\caption{\label{fig:frechet-median-hardness} Five curves from $G\cup R_{i,j}$ in the reduction for the Fr\'echet average curve problem and a center curve constructed from $ABBA$ (purple) as in Lemma~\ref{lem:FCCS-to-curve-frechet}. Matchings are indicated by dotted lines. Note that each of these matchings achieves a (discrete) Fr\'echet distance of $1$.}
\end{figure}
\begin{figure}[p]
	\centering
	\includegraphics[scale=0.7]{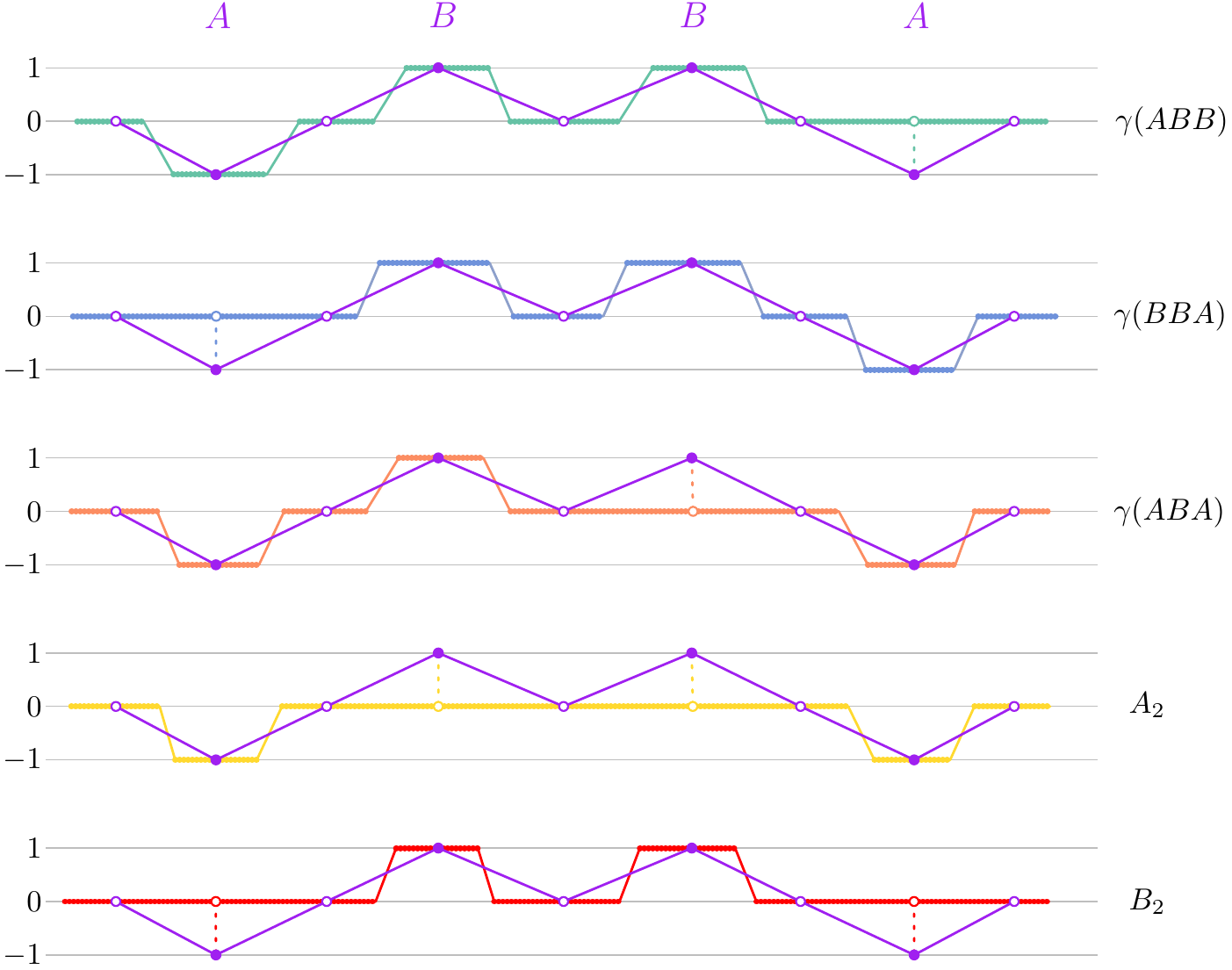}
	\caption{\label{fig:DTW-median-hardness} Five curves from $G\cup R_{i,j}$ in the reduction for the DTW average curve problem and a center curve constructed from the string $ABBA$ (purple) as in Lemma~\ref{lem:FCCS-to-DTW}. Fat horizontal lines indicate $\beta$ consecutive vertices. Vertices that match at distance $0$ touch, vertices that match at distance $1$ are indicated by dotted lines. The center has $1$ mismatch with the first $3$ curves and $2$ with the final two, so the total cost here is $3\cdot (1^p)^{q/p} + 2\alpha\cdot (2\cdot 1^p)^{q/p} = 3+2\alpha \cdot 2^q$  }
\end{figure}

\begin{lemma}\label{lem:FCCS-to-curve-frechet}
	If $(S,i,j)$ is a true instance of FCCS, then $(G\cup R_{i,j}, r)$ is a true instance of the average curve problem for discrete and continuous Fr\'echet.
\end{lemma}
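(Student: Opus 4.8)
The plan is to establish this (easy) direction by constructing an explicit center curve $c$ from the supersequence and showing that its discrete Fr\'echet distance to every one of the $|S|+2$ curves in $G\cup R_{i,j}$ is at most $1$; summing these bounds then yields total cost at most $|S|+2=r$. Let $s^*$ be a supersequence of all strings in $S$ with $\acount(s^*)=i$ and $\bcount(s^*)=j$. I would read $s^*$ left to right and emit, for each character, a \emph{marker} vertex of value $-2$ for an $A$ and $+2$ for a $B$, placing a single \emph{buffer} vertex of value $0$ before the first marker, between every two consecutive markers, and after the last. The choice of marker value is the crucial design decision: $-2$ sits at distance exactly $1$ from both the letter value $g_A=-3$ and the negative buffer value $g_a=-1$, and symmetrically $+2$ is at distance $1$ from $g_B=3$ and $g_b=1$, while $0$ is at distance $1$ from both $g_a$ and $g_b$. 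Hence every vertex of $c$ can be matched within distance $1$ either to a letter gadget of the correct sign or to a buffer vertex of the correct sign, and it is exactly this double duty of the markers that lets a single center absorb the ``extra'' characters of $s^*$ that do not appear in a given input string.

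For a fixed input curve $\gamma(s)$ I would build the warping path from the supersequence relation. Fix an embedding of $s$ into $s^*$ as a subsequence. The markers of $c$ coming from this embedded copy are matched, in order, to the corresponding letter gadgets of $\gamma(s)$ (distance $1$), while the remaining \emph{extra} markers of $c$ are absorbed into buffer vertices of the matching sign and each $0$-buffer of $c$ is matched to a run of the alternating $\pm1$ buffer vertices of $\gamma(s)$ (all at distance $1$). The step I expect to be the main obstacle is checking that one \emph{monotone} warping path realizes all of this at once. This is where the buffer length is used: each letter gadget of $\gamma(s)$ is flanked by a block $(g_ag_b)^{i+j}$ or $(g_bg_a)^{i+j}$ of $i+j$ alternating sign-pairs, whereas the number of extra markers is $(i+j)-|s|\le i+j$; advancing the pointer through such a block only on a sign change, and letting a single buffer vertex absorb a run of equal-sign markers, one never runs out of same-sign vertices in the right order. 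This gives $\dfrech(c,\gamma(s))\le 1$.

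For the two constraint curves I would argue by counting and ordering. In $A_i=g_b(g_Ag_b)^i$ the value $g_A=-3$ occurs exactly $i$ times and is within distance $1$ only of the marker value $-2$; since $\acount(s^*)=i$, the curve $c$ has exactly $i$ $A$-markers, so these match the $i$ occurrences of $g_A$ one-to-one and in order, while every other vertex of $c$ (its $B$-markers at $+2$ and its $0$-buffers) lies within distance $1$ of $g_b=1$ and is routed onto the $g_b$ vertices of $A_i$. The curve $B_j$ is handled symmetrically using $\bcount(s^*)=j$. Combining with the previous paragraph gives $\sum_{g\in G}\dfrech(c,g)+\dfrech(c,A_i)+\dfrech(c,B_j)\le |S|+2=r$, so $(G\cup R_{i,j},r)$ is a true instance for the discrete Fr\'echet distance.

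Finally, for the continuous Fr\'echet distance I would lift the same combinatorial alignment to a reparametrization, sending each marker of $c$ to a point of the target curve attaining the value it was matched to above (for instance a $B$-marker matched to $A_i$ goes to a peak of value $g_b=1$, the largest value $A_i$ attains) and letting each $0$-buffer sweep the corresponding stretch. Since all target curves take values in $\{-3,-1,1,3\}$ and are piecewise linear, every matched pair differs by at most $1$, giving $\frech(c,g)\le 1$. The only extra care here is that, unlike the discrete case, distinct markers need distinct image points, so a run of $k$ equal-sign extra markers must be sent to $k$ distinct near-$\pm1$ touchpoints; the $i+j$ oscillations per buffer block supply exactly enough such points, so the argument goes through and the same bound on the sum follows.
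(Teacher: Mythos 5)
Your proposal is correct and takes essentially the same route as the paper: you construct the identical center curve (buffer vertices at $0$ alternating with marker vertices at $\mp 2$), match the embedded copy of $s$ to the letter gadgets, absorb the at most $i+j$ extra markers into the alternating $\pm 1$ buffer blocks, and handle $A_i$, $B_j$ by the same exact-count argument. The only difference is that for the continuous case the paper simply invokes the standard inequality $\frech(x,y)\le\dfrech(x,y)$ for polygonal curves, whereas you re-derive it by hand via an explicit reparametrization---correct, but unnecessary work.
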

\begin{proof}
	We will show the proof for the discrete Fr\'echet distance. Since the discrete Fr\'echet distance is an upper bound of the continuous version, this proves the continuous case as well.
	
	Since $(S,i,j)$ is a true instance of FCCS, there exists a common supersequence $s^*$ of $S$ with $\acount(s^*)=i$ and $\bcount(s^*)=j$. Construct the curve $c$ of complexity $2|s^*|+1$, given by
	\[c_l=\begin{cases} 0 & \text{if $l$ is odd}\\ -2 & \text{if $l$ is even and $s^*_{l/2}=A$}\\ 2 & \text{if $l$ is even and $s^*_{l/2}=B$} \end{cases}, \] for each $l\in \{1,\ldots,2|s^*|+1\}$. Let $s\in S$, then note that the sequence of letter gadgets in $\gamma(s)$ is a subsequence of the letter gadgets in $c$, because $s$ is a subsequence of $s^*$. So, all letter gadgets in $\gamma(s)$ can be matched with a letter gadget in $c$, the remaining letter gadgets in $c$ with a buffer gadget in $\gamma(s)$ and all remaining buffer gadgets with another buffer gadget, such that $\dfrech(c,\gamma(s))\leq 1$. For the matching with $A_i$, note that $c$ has exactly $i$ $g_A$ vertices, so these can be matched with the $i$ $g_A$ vertices in $A_i$. All other vertices in $c$ have distance $1$ to the remaining buffer gadgets in $A_i$, so $\dfrech(c,A_i)\leq 1$. Analogously, $\dfrech(c,B_j)\leq 1$. So, we get $\sum_{g\in G\cup R_{i,j}} \dfrech(c,g) = \sum_{s\in S} \dfrech(c,\gamma(s)) + \dfrech(c,A_i)+\dfrech(c,B_j) \leq |S|+2 = r$, and $(G\cup R_{i,j}, r)$ is a true instance of average curve for discrete Fr\'echet.
\end{proof}

\begin{lemma}
\label{lem:curve-frechet-to-FCCS}
If $(G\cup R_{i,j}, r)$ is a true instance of the average curve problem for discrete and continuous Fr\'echet, then $(S,i,j)$ is a true instance of FCCS.\end{lemma}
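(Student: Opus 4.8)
The plan is to take any center curve $c$ witnessing $\sum_{g\in G\cup R_{i,j}}\dfrech(c,g)\le r=|S|+2$ and extract from it a supersequence $s^*$ of $S$ with $\acount(s^*)=i$ and $\bcount(s^*)=j$. The whole argument hinges on first putting $c$ into a normal form: I want to show that $c$ may be assumed to use only the vertex values $\{-2,0,2\}$ and that it lies at distance \emph{exactly} $1$ from every one of the $|S|+2$ input curves. Once this normal form is available, reading off the center gives the desired string almost immediately.

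First I would record the two elementary geometric facts that drive everything. Every vertex of $c$ that is matched to $A_i=g_b(g_Ag_b)^i$ within distance $1$ must lie in $[-4,-2]\cup[0,2]$ (the radius-$1$ neighborhoods of the only values $g_A=-3$ and $g_b=1$ occurring in $A_i$), and every vertex matched to $B_j$ within $1$ must lie in $[-2,0]\cup[2,4]$; the intersection of these two sets is exactly $\{-2,0,2\}$. Dually, the vertex $g_a=-1$ occurring in every $\gamma(s)$ and in $B_j$, and the vertex $g_b=1$ occurring in $A_i$, must each be matched to \emph{some} point of $c$, and $\operatorname{dist}(\pm1,\{-2,0,2\})=1$, so any curve with all vertices in $\{-2,0,2\}$ already satisfies $\dfrech(c,g)\ge 1$ for each $g\in G\cup R_{i,j}$. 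Consequently, once $c$ is known to use only $\{-2,0,2\}$, the budget $|S|+2$ is exhausted only if every single distance equals $1$.

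The main obstacle is exactly the passage to this normal form: I must show a feasible $c$ can be taken over $\{-2,0,2\}$ without exceeding the budget. The construction is engineered so that $A_i,B_j$ pull the center onto $\{-2,0,2\}$ while the long buffer gadgets $(g_ag_b)^{i+j}$ pull it toward $\pm1$, and the budget is tight precisely at the balance point. Quantitatively, writing $x=\dfrech(c,A_i)$ and $y=\dfrech(c,B_j)$, the requirement that every center point be within $x$ of the values $\{-3,1\}$ and within $y$ of the values $\{-1,3\}$ forces $x+y\ge 2$; and if some $\dfrech(c,\gamma(s))=1-\delta<1$, then matching the buffer vertices $g_a,g_b$ of $\gamma(s)$ forces a center point into $[-2+\delta,-\delta]$ and one into $[\delta,2-\delta]$, each at distance at least $1+\delta$ from the value set of $A_i$ (resp.\ $B_j$), so $x,y\ge 1+\delta$. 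The delicate point, and where the exact gadget counts of $A_i$ and $B_j$ (namely $i$ copies of $g_A$ and $j$ of $g_B$) are essential, is to charge these local deficits against the surplus in $x+y$ so the total never drops below $|S|+2$: a center trying to lower several $\dfrech(c,\gamma(s))$ below $1$ must encode more than $i$ letter gadgets of one type, which $A_i$ or $B_j$ can then only match at distance $\ge 3$, blowing up $x+y$. I would formalise this as a snapping/exchange argument turning $c$ into a curve over $\{-2,0,2\}$ of no larger total cost.

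Once $c$ is in normal form with every distance equal to $1$, the remainder is routine. Using the warping path to $A_i$, each of its $i$ copies of $g_A=-3$ can only be matched within $1$ to a center vertex $-2$, and---after deleting consecutive duplicates from $c$---this is a bijection, since a buffer vertex $g_b=1$ separating two $g_A$'s cannot be matched to a $-2$; hence $c$ has exactly $i$ vertices equal to $-2$, and symmetrically exactly $j$ equal to $2$. Define $s^*$ by reading the nonzero vertices of $c$ in order, sending $-2\mapsto A$ and $2\mapsto B$, so $\acount(s^*)=i$ and $\bcount(s^*)=j$. Finally, for each $s\in S$ I use $\dfrech(c,\gamma(s))=1$: the letter gadgets of $\gamma(s)$, which spell $s$, must be matched to center vertices $-2$ (for $g_A$) and $2$ (for $g_B$), and to \emph{distinct} ones, since a whole buffer block lies between consecutive letter gadgets of $\gamma(s)$ and cannot collapse onto a single $\pm2$; by monotonicity of the warping path these matched center letter gadgets occur in order, so $s$ is a subsequence of $s^*$. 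Thus $s^*$ is a common supersequence of $S$ with the prescribed counts, and $(S,i,j)$ is a true instance of FCCS. The continuous case follows by the same argument, reasoning about matched points rather than vertices; since all the governing inequalities depend only on the vertex values $\{-3,-1,1,3\}$ of the input curves, the estimates are unchanged.
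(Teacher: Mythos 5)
There is a genuine gap, and it sits exactly where you place the main weight: the ``snapping/exchange'' normalization. Your charging scheme shows that a single curve with $\dfrech(c,\gamma(s))=1-\delta$ forces $\dfrech(c,A_i)+\dfrech(c,B_j)\ge 2+2\delta$, but this penalty does not scale with the number of deficient curves: if all $|S|$ curves had distance $1-\delta$, your inequalities only give a total of at least $|S|(1-\delta)+2+2\delta = |S|+2-(|S|-2)\delta$, which is \emph{below} the budget $r=|S|+2$ once $|S|>2$, so no contradiction follows from what you wrote. What rules this scenario out is a structural fact your proposal never establishes: distinct gadget curves are mutually far apart, $\frech(\gamma(s),\gamma(s'))\ge 2$ for $s\neq s'$, which via the triangle inequality (summed cyclically over $S$) yields $\sum_{s\in S}\frech(c^*,\gamma(s))\ge |S|$, and hence that at most one input curve can ever be closer than $1$ to the center. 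This pairwise-separation claim and the cyclic summation are the paper's first two claims and the engine of its proof. Your substitute assertion --- that lowering several distances below $1$ forces more than $i$ letter gadgets of one type, matched at distance $\ge 3$ --- is both unproven and not the right mechanism: for $s=AB$ and $s'=BA$ the letter counts are identical, so no count-based charge can separate them; the real obstruction is order rigidity of the matching, which is exactly what the pairwise-distance claim encodes.

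The second failure is the continuous case, which you dismiss in one sentence. Since $\frech\le\dfrech$, a true continuous instance is the \emph{weaker} hypothesis, so it is the continuous implication that must be proved (the paper proves the continuous case and obtains the discrete one for free; your direction of reduction is backwards). And your estimates do not survive there: under the continuous Fr\'echet distance, points of $c$ are matched to points on the \emph{image} of $A_i$, which is the whole interval $[-3,1]$, not the vertex set $\{-3,1\}$ --- a center point at $-1$ is at distance $0$ from $A_i$, so the confinement to $[-4,-2]\cup[0,2]$, the forced bounds $x,y\ge 1+\delta$, and with them the entire $\{-2,0,2\}$ normal form collapse. (Indeed, ``vertices in $\{-2,0,2\}$'' is not even a meaningful necessary condition continuously, since inserting a vertex at $-0.5$ on a segment leaves the trace unchanged.) The paper avoids a normal form altogether: after pinning all $|S|+2$ distances to exactly $1$, it observes that the center is then a solution to the $1$-center problem for the same gadgets and invokes the analysis of Lemma 33 of Buchin et al., which extracts the supersequence from an \emph{arbitrary} continuous center at distance $1$. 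Your final extraction step (the bijection between $-2$-runs and $g_A$ gadgets, with order preservation enforced by buffers) is sound in the discrete, normalized setting, but that setting is precisely what the proposal fails to reach.
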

\begin{proof}	
	We will show the proof for the continuous Fr\'echet distance. Since the continuous Fr\'echet distance is a lower bound of the discrete version, this proves the discrete case as well. 
	
	We call the interval of points $p$ on a subcurve $g_bg_Ag_b$ with $p<-1$ an \emph{A-peak}, and the interval of points $p$ on a subcurve $g_ag_Bg_a$ with $p>1$ a \emph{B-peak}. A curve $\gamma(s)$ has exactly one peak for every letter in $s$.
	
	Since $(G\cup R_{i,j}, r)$ is a true instance of the average curve problem for continuous Fr\'echet, there exists a curve $c^*$ such that $\sum_{g\in G\cup R_{i,j}} \frech(c^*,g)\leq r = |S|+2$. We start by deriving bounds for the distance between $c^*$ and the individual curves in $G\cup R_{i,j}$. 
	\begin{claim*} $\frech(\gamma(s),\gamma(s'))\geq 2$ for all $s,s' \in S$ such that $s\neq s'$.
	\end{claim*}
	\begin{proof}
		If a letter vertex $p$ on $\gamma(s)$ is matched with a point $p'$ that does not lie on a peak of the same letter in $\gamma(s')$, then $|p-p'|\geq 2$ and so $\frech(\gamma(s),\gamma(s'))\geq 2$. By symmetry, the same holds if we exchange $s$ and $s'$.

		Otherwise, each letter vertex can be matched only with points on a peak of the same letter. Let $k$ be the first index such that $s[k]\neq s'[k]$. Then, the $k$-th letter vertex of $\gamma(s)$ cannot be matched to any point on the $k$-th peak of $\gamma(s')$ and must be matched to a point on another peak; the same holds with $s$ and $s'$ exchanged. It is not possible that on both curves the $k$-th letter vertex is matched with a peak of index larger than $k$, since the matching is monotone. So, one of the curves has its $k$-th letter vertex matched with a point on a peak of index smaller than $k$, we assume w.l.o.g. that this curve is $s$.
		
		By monotonicity, the first $k$ letter vertices of $s$ are matched to the first $k-1$ peaks of $s'$, so there are two letter vertices on $s$ that are both matched with a point on the same peak on $s'$. The interval between those two points on this peak on $s'$ must be matched with the interval between the letter vertices on $s$, so all points in the buffer gadget between the letter vertices on $s$ are matched to some point on the peak on $s'$. But then there is either a point on an A-peak matched to $g_b$ or a point on a B-peak matched to $g_a$, which in both cases has distance a least $2$, so $\frech(\gamma(s),\gamma(s'))\geq 2$.
	\end{proof}

    \begin{claim*}
    $\frech(c^*,A_i) + \frech(c^*,B_j) \leq 2$
    \end{claim*}
    \begin{proof}
	Using the previous claim and the triangle inequality, we have 
	\[d_F(c^*,\gamma(s_k)) + d_F(c^*,\gamma(s_{k+1}))\geq d_F(\gamma(s_{k}),\gamma(s_{k+1})\geq 2\] for all $k\in \{1,\ldots, m-1\}$ and $d_F(c,\gamma(s_m)) + d_F(c,\gamma(s_1))\geq 2$. The summation of these $m$ inequalities has each $s_k$ exactly twice on the lefthand side, so $\sum_{k=1}^m 2d_F(c^*,\gamma(s_k)) >= 2m$, hence $\sum_{k=1}^m d_F(c^*, \gamma(s_k)) \geq m = |S|$. 
	So, $\frech(c^*,A_i) + \frech(c^*,B_j) \leq r - \sum_{k=1}^m d_F(c^*, \gamma(s_k)) \leq 2$.
	\end{proof}

    \begin{claim*} $\frech(c^*,A_i)\geq 1$ and $\frech(c^*,B_j)\geq 1$.
    \end{claim*}
	\begin{proof}
		Suppose $\frech(c^*,A_i)<1$. Then, all points $p$ on $c^*$ are matched to some point in $[-3,1]$ with distance $<1$, which means $|p-g_B|>1$. We can assume that each string in $S$ contains at least one $B$ character (if there is a string $s$ with only $A$ characters, any supersequence with $i$ A-characters is a supersequence of $s$ when $|s|\leq i$ and none when $|s|>i$, so we can remove such trivial strings from the instance and check if the instance is trivially false). Therefore, $\frech(c^*,\gamma(s))>1$ for any $s\in S$. 
		
		Since $|g_a-g_b|=2$, we have $\frech(A_i,B_j)\geq 2$, so $\frech(c^*,A_i) + \frech(c^*,B_j)\geq \frech(A_i,B_j)\geq 2$. But then $r\geq \sum_{g\in G\cup R_{i,j}} \frech(c^*,g) > |S|+2 = r$, a contradiction, so $\frech(c^*,A_i)\geq 1$.  The proof of $\frech(c^*,B_j)\geq 1$ is analogous.
	\end{proof}
	
	\begin{claim*}
	$\frech(c^*,g)=1$ for all $g\in G\cup R_{i,j}$.
	\end{claim*}
	\begin{proof}
	The last two claims together imply $\frech(c^*,A_i)= \frech(c^*,B_j)= 1$. This means that for each point $p$ on $c^*$, $|p|\leq 2$ (otherwise, $p$ has distance $>1$ to all points on $A_i$ or all points on $B_j$), so $\frech(c^*,\gamma(s))\geq 1$ for all $s\in S$, since we can assume $s$ contains at least one $A$ and $B$ character. Therefore, $\frech(\gamma(s),c^*)\leq r - \frech(A_i,c^*)-\frech(B_j,c^*) - \sum_{s'\in S\setminus\{s\}} \frech(\gamma(s'),c^*) \leq |S|- (|S|-1)= 1$ for all $s\in S$. 
	\end{proof}
	
	Now we have shown that any center curve that achieves a cost of $|S|+2$ for the constructed $k$-median instance needs to have Fr\'echet distance equal to $1$ to all curves in this instance. It remains to show that such a center curve encodes a solution to the initial FCCS instance. Note that such a center curve is also a solution to the $1$-center problem for this set of curves. We can now apply the proof of Lemma 33 from~\cite{KAJFrechetCenter, KAJFrechetCenterArXiv}, where the same gadgets were used in the reduction to the $1$-center problem.	
\end{proof}

\begin{theorem}\label{thm:Frechet-hard}
	The average curve problem for discrete and continuous Fr\'echet distance is NP-hard. When parametrized in the number of input curves $m$, this problem is W[1]-hard. There exists no $f(m)\cdot n^{o(m)}$ time algorithm for this problem unless ETH fails.
\end{theorem}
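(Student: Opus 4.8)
The plan is to combine the three preceding lemmas. Lemmas~\ref{lem:FCCS-to-curve-frechet} and~\ref{lem:curve-frechet-to-FCCS} together show that the map $(S,i,j)\mapsto(G\cup R_{i,j},r)$ is a correct reduction from FCCS to the average curve problem, simultaneously for the discrete and the continuous Fr\'echet distance: the forward lemma sends yes-instances to yes-instances, and the backward lemma shows that a yes-instance of the curve problem forces a yes-instance of FCCS. Since Lemma~\ref{lem:SCS-prime} establishes that FCCS is NP-hard, W[1]-hard in $m$, and admits no $f(m)\cdot n^{o(m)}$ algorithm unless ETH fails, it remains only to verify that this reduction transports each of these three hardness properties. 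First I would confirm that the reduction is computable in polynomial time: each character is replaced by a subcurve of $2(i+j)+1$ vertices, so every curve $\gamma(s)$ has complexity $O(n(i+j))$, the auxiliary curves $A_i,B_j$ have complexity $O(i+j)$, and hence $G\cup R_{i,j}$ has total size polynomial in the size of $(S,i,j)$. Combined with correctness, this already yields NP-hardness.

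For the parametrized claim, the crucial structural fact is that the output contains exactly $m+2$ curves, namely the $m$ curves of $G$ together with $A_i$ and $B_j$. Taking the number of curves $m'$ as the parameter of the target problem, we therefore have $m'=m+2$, a function of the source parameter $m$ alone, and the reduction is computable in FPT (indeed polynomial) time. This makes it a valid parametrized reduction, so the W[1]-hardness of FCCS in $m$ carries over to W[1]-hardness of the average curve problem in $m'$.

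For the ETH lower bound I would exploit the same additive relationship $m'=m+2$ together with the polynomial size bound. Suppose, for contradiction, that the average curve problem admitted an algorithm running in time $g(m')\cdot N^{o(m')}$, where $N$ denotes the maximum complexity of the input curves. Substituting $m'=m+2$ and $N=\mathrm{poly}(m,n)$ (on the hard instances arising through Lemma~\ref{lem:SCS-prime} the values $i,j$ are polynomially bounded, so $N=O(n(i+j))$ is polynomial) gives running time $g(m+2)\cdot\mathrm{poly}(m,n)^{o(m+2)}$; since $o(m+2)=o(m)$ and $\mathrm{poly}(m,n)^{o(m)}=m^{o(m)}\,n^{o(m)}$, the factor $m^{o(m)}$ depends on $m$ only and can be absorbed into a new function $f(m)$, leaving an $f(m)\cdot n^{o(m)}$ algorithm for FCCS. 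This contradicts Lemma~\ref{lem:SCS-prime} unless ETH fails.

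The genuine content of the theorem lies entirely in the two reduction lemmas, which are already proved; the remaining step is the bookkeeping above, so I do not expect a real obstacle. The point that requires the most care is the ETH transfer: one must check that the blow-up in curve length is only polynomial and, more importantly, that the number of curves grows by an additive constant rather than a multiplicative factor, so that the linear parameter dependence inherited from \textsc{Clique} through Lemma~\ref{lem:SCS-prime} is preserved and the $n^{o(m)}$ barrier survives the substitution.
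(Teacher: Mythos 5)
Your proposal is correct and follows essentially the same route as the paper: compose Lemmas~\ref{lem:FCCS-to-curve-frechet} and~\ref{lem:curve-frechet-to-FCCS} into a polynomial-time reduction from FCCS, observe that the number of output curves is the number of input strings plus two so the parameter is preserved up to an additive constant, and transfer the hardness from Lemma~\ref{lem:SCS-prime}. The only (immaterial) differences are that the paper phrases the ETH transfer by composing back to a parametrized reduction from \textsc{Clique} with linearly bounded parameter, while you contradict the $f(m)\cdot n^{o(m)}$ bound for FCCS directly (correctly noting that $i,j$ are polynomially bounded on the hard instances), and your per-character vertex count should be $4(i+j)+1$ rather than $2(i+j)+1$, which does not affect the $O(n(i+j))$ size bound.
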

\begin{proof} 
	By Lemmas~\ref{lem:FCCS-to-curve-frechet} and~\ref{lem:curve-frechet-to-FCCS}, we have a valid reduction from FCCS to the average curve problem.
	Since this reduction runs in polynomial time and FCCS is NP-hard (Lemma~\ref{lem:SCS-prime}), the average curve problem for discrete and continuous Fr\'echet is NP-hard. Note that the number of curves in the reduced average curve instance is $k+2$, where $k$ is the number of input sequences of the FCCS instance. So, together with the reduction from Lemma~\ref{lem:SCS-prime}, this reduction is also a parametrized reduction from \textsc{Clique} with a linear bound on the parameter to the average curve problem for discrete and continuous Fr\'echet with the number of curves as a parameter, which implies the remainder of the theorem.
\end{proof}

\section[Hardness of the average curve problem for (p,q)-DTW]{Hardness of the average curve problem for $(p,q)$-DTW}

We will show that the average curve problem under the $(p,q)$-DTW distance is NP-hard for all $p,q\in \mathbb{N}$. This generalises the result of~\cite{bulteau2018hardness}, who use different methods to achieve the same hardness results for the $(2,2)$-DTW average curve problem only. We again reduce from FCCS instance $(S,i,j)$. Given a string $s\in S$ over the binary alphabet $\{A,B\}$, we map each character to a subcurve in $\rr$: 
\[A\rightarrow  g_0^\beta g_a^\beta g_0^\beta \qquad B\rightarrow g_0^\beta g_b^\beta g_0^\beta,\] where $g_0=0, g_a=-1, g_b=1$ as before and $\beta$ is a large constant that will be determined later. The curve $\gamma(s)$ is constructed by concatenating these subcurves and $G=\{\gamma(s)\mid s\in S\}$. We additionally use the curves \[A_i = g_0^\beta (g_a^\beta g_0^\beta)^i \qquad B_j= g_0^\beta (g_b^\beta g_0^\beta)^j.\] Call any subcurve consisting of $g_a$ or $g_b$ vertices a letter gadget and any subcurve consisting of $g_0$ a buffer gadget. Let $R_{i,j}$ contain curves $A_i$ and $B_j$, both with multiplicity $\alpha$. We reduce to the instance $(G\cup R_{i,j},r)$ of $(p,q)$-DTW average curve, where $r=\sum_{s\in S}(i+j-|s|)^{q/p} + \alpha (i^{q/p}+j^{q/p})$, $\beta=\lceil r/\eps^{q}\rceil +1$, $\alpha=|S|$ and $\eps = 1-(1-\min_{x\in \{i,j\}} \frac{(x+1)^{q/p}-x^{q/p}}{4(i+j)^{q/p}})^{1/q}$. See Figure~\ref{fig:DTW-median-hardness} for an example of this construction with $S=\{ABB,BBA,ABA\}$ and $i=j=2$.

The following definitions are used to prove Lemma~\ref{lem:pq-dtw-to-FCCS}. Take a vertex $p$ on some center curve $c^*$. If $|p-g_a|<\eps$, we call $p$ an \emph{A-signal vertex}. If $|p-g_b|<\eps$ we call $p$ an \emph{B-signal vertex}. If $p$ is not a signal vertex, then we call $p$ a \emph{buffer vertex}. Note that $\eps$ is chosen small enough such that no vertex is both an A- and B-signal vertex. We will show that the sequence of signal vertices in the curve satisfying $(G\cup R_{i,j},r)$ is a supersequence satisfying $(S,i,j)$.

\begin{lemma}\label{lem:FCCS-to-DTW}
	If $(S,i,j)$ is a true instance of FCCS, then $(G\cup R_{i,j},r)$ is a true instance of $(p,q)$-DTW average curve.
\end{lemma}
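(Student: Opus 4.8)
The plan is to mirror the construction of Lemma~\ref{lem:FCCS-to-curve-frechet}, building an explicit center curve from the FCCS solution and then exhibiting, for each input curve, a single warping path whose cost realises exactly the corresponding term of $r$. Let $s^*$ be a common supersequence of $S$ with $\acount(s^*)=i$ and $\bcount(s^*)=j$. I would take the center $c$ of complexity $2(i+j)+1$ defined by $c_l=g_0$ for odd $l$ and, for even $l$, $c_l=g_a$ if $s^*_{l/2}=A$ and $c_l=g_b$ if $s^*_{l/2}=B$. The crucial design choice, which differs from the naive intuition of using $\beta$-blocks, is that the signal vertices of $c$ are \emph{single} vertices: since DTW may match one center vertex to a whole block of the input, a matched signal block of the input can still be covered at distance $0$, while an unmatched (``skipped'') center signal can be absorbed by a single input buffer vertex at cost exactly $1$.

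For a curve $\gamma(s)\in G$, since $s$ is a subsequence of $s^*$, I would fix an increasing embedding of the letters of $s$ into those of $s^*$ and build a warping path as follows: each signal vertex of $c$ that is used by the embedding is matched, one-to-many, to the entire equal-valued block of size $\beta$ of the corresponding letter gadget of $\gamma(s)$, at distance $0$; every buffer vertex of $c$ is matched to buffer vertices of $\gamma(s)$ at distance $0$; and each of the $(i+j)-|s|$ signal vertices of $c$ not used by the embedding is routed through the surrounding buffer region of $\gamma(s)$, matched to a single $g_0$ vertex at distance $1$. This path is monotone and covers both curves, and every index pair has cost $0$ except the $(i+j)-|s|$ mismatches, each contributing $\|g_a-g_0\|^p=\|g_b-g_0\|^p=1$. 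Hence $\dtw_p^q(c,\gamma(s))\le\big((i+j)-|s|\big)^{q/p}$.

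The curves $A_i$ and $B_j$ are handled the same way. For $A_i=g_0^\beta(g_a^\beta g_0^\beta)^i$, the $i$ A-signal vertices of $c$ match its $i$ letter blocks at distance $0$, while each of the $j$ B-signal vertices of $c$ is a mismatch routed through a buffer, giving $\dtw_p^q(c,A_i)\le j^{q/p}$; symmetrically $\dtw_p^q(c,B_j)\le i^{q/p}$. Summing over $G\cup R_{i,j}$, and recalling that $A_i$ and $B_j$ occur with multiplicity $\alpha$, the total cost is at most $\sum_{s\in S}\big((i+j)-|s|\big)^{q/p}+\alpha\big(i^{q/p}+j^{q/p}\big)=r$, so $(G\cup R_{i,j},r)$ is a true instance. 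I expect the only genuine work to be the bookkeeping that makes these warping paths simultaneously monotone and valid; the key conceptual point — and the step I would be most careful about — is that keeping the center signals \emph{thin} lets the one-to-many DTW matching charge each skipped letter exactly one unit rather than $\beta$ units, which is precisely what is needed to hit the budget $r$ and why the large constant $\beta$ plays no role in this direction.
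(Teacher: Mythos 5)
Your proposal is correct and takes essentially the same approach as the paper: the paper constructs the identical center curve $c$ of complexity $2(i+j)+1$ (buffer vertices at $g_0$ on odd positions, single signal vertices $g_a$/$g_b$ on even positions following $s^*$) and the same matching, with letter gadgets covered one-to-many at distance $0$ and each skipped signal vertex charged $1$ against a buffer vertex, giving total cost at most $r$. Your write-up merely makes explicit the warping-path bookkeeping (and the observation that $\beta$ plays no role in this direction) that the paper dispatches with ``analogously to Lemma~\ref{lem:FCCS-to-curve-frechet}''.
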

\begin{proof} If $(S,i,j)$ is a true instance of FCCS, then there exists a string $s^*$ that is a supersequence of $S$, with $\acount(s^*)=i$ and $\bcount(s^*)=j$. Construct the curve $c$ of length $2(i+j)+1$: \[c_l=\begin{cases} 0 & \text{if $l$ is odd}\\ g_a & \text{if $l$ is even and $s^*_{l/2}=A$}\\ g_b & \text{if $l$ is even and $s^*_{l/2}=B$} \end{cases}, \] for each $l\in \{1,\ldots,2(i+j)+1\}$. Analogously to Lemma~\ref{lem:FCCS-to-curve-frechet}, we can match the letter gadgets from $\gamma(s)$ to $g_A$ or $g_B$ in $c$ as $s^*$ is a supersequence of $s$, the letter gadgets of $A_i,B_j$ to $g_A,g_B$ in $c$ as the number of curves match, and $g_0$ vertices to buffer gadgets. This gives a matching such that $\sum_{g\in G\cup R_{i,j}} \dtw_p^q(c,g) \leq r$.
\end{proof}

\begin{lemma}\label{lem:pq-dtw-to-FCCS}
	If $(G\cup R_{i,j},r)$ is a true instance of $(p,q)$-DTW average curve, then $(S,i,j)$ is a true instance of FCCS. 
\end{lemma}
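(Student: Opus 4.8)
The plan is to start from a center curve $c^*$ realising $\sum_{g\in G\cup R_{i,j}}\dtw_p^q(c^*,g)\le r$ and to read a candidate supersequence $s^*$ directly off its signal vertices: each A-signal vertex contributes a character $A$ and each B-signal vertex a character $B$, in the order they occur along $c^*$, while buffer vertices are ignored. The goal is to prove that $s^*$ is a common supersequence of every $s\in S$ and that $\acount(s^*)=i$ and $\bcount(s^*)=j$, which is exactly what certifies that $(S,i,j)$ is a true FCCS instance.

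First I would establish the \emph{block-forcing} property, which is the whole point of making $\beta$ large. Fix any curve $g\in G\cup R_{i,j}$ and an optimal warping path to $c^*$. I claim that every letter gadget of $g$ --- a block of $\beta$ consecutive vertices all equal to $g_a$ or $g_b$ --- must be matched to at least one signal vertex of the corresponding type. Indeed, if some A-gadget were matched only to vertices at distance $\ge\eps$ from $g_a$, then each of its $\beta$ matched pairs would contribute at least $\eps^p$ to the inner sum, so $\dtw_p^q(c^*,g)\ge(\beta\eps^p)^{q/p}$; since $\beta$ is chosen large enough that this quantity exceeds $r$, the total cost would exceed $r$, a contradiction. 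The same choice of $\beta$ rules out a single signal vertex serving two letter gadgets: by monotonicity of the warping path such a vertex would also have to absorb the entire intervening buffer block $g_0^\beta$, again forcing $\dtw_p^q(c^*,g)\ge(\beta(1-\eps)^p)^{q/p}>r$. Applying this to each $\gamma(s)$ and using monotonicity, the letter gadgets of $\gamma(s)$ inject order-preservingly into the signal vertices of $c^*$ of matching type, so $s$ is a subsequence of $s^*$; hence $s^*$ is a common supersequence of $S$.

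Next I would pin down the character counts using the auxiliary curves. Writing $a$ and $b$ for the numbers of A- and B-signal vertices of $c^*$, block-forcing applied to $A_i$ (which has $i$ A-gadgets and no B-gadget) gives $a\ge i$, and applied to $B_j$ gives $b\ge j$. For the matching upper bounds I would account the cost exactly against $r=\sum_{s\in S}(i+j-|s|)^{q/p}+\alpha(i^{q/p}+j^{q/p})$. The key observation is that any signal vertex not aligned with a gadget of its own type must be matched to a buffer gadget or to a gadget of the opposite type, contributing at least $(1-\eps)^p$ per matched pair. In particular, when $c^*$ is matched to $A_i$, all $b$ of its B-signal vertices are such excess vertices, so the inner sum is at least $b(1-\eps)^p$ and $\dtw_p^q(c^*,A_i)\ge b^{q/p}(1-\eps)^q$; symmetrically $\dtw_p^q(c^*,B_j)\ge a^{q/p}(1-\eps)^q$. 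Since these two curves carry multiplicity $\alpha=|S|$, even one signal beyond $i+j$ inflates the cost on $A_i$ or $B_j$ by at least $\alpha\big((x+1)^{q/p}-x^{q/p}\big)(1-\eps)^q$ for the relevant $x\in\{i,j\}$, while it can only increase the cost on the $\gamma(s)$ curves. The value of $\eps$ is chosen precisely so that this marginal increment dominates the at most $\eps$-sized per-vertex savings an adversarial center could otherwise extract by placing vertices just outside the signal windows; comparing against $r$ then forces $a\le i$ and $b\le j$. Together with $a\ge i$ and $b\ge j$ this yields $a=i$ and $b=j$.

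Combining the two parts, $s^*$ is a common supersequence of $S$ with exactly $i$ occurrences of $A$ and $j$ of $B$, so $(S,i,j)$ is a true FCCS instance. I expect the counting step to be the main obstacle: because each curve's cost is the nonlinear power $t\mapsto t^{q/p}$ of its inner sum, per-pair costs cannot simply be added across curves, and one must argue carefully that an adversarial $c^*$ cannot exploit vertices lying just outside the signal windows, nor have one vertex partially serve several gadgets or roles at once. The somewhat opaque definition of $\eps$ is exactly the quantity that makes the marginal cost of an extra signal vertex (a discrete difference of $t^{q/p}$) outweigh these potential savings, and verifying this inequality cleanly for all $p,q\in\mathbb{N}$ is the delicate technical core.
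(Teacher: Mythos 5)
Your first stage---the block-forcing argument---matches the paper's proof essentially verbatim: each letter gadget must meet a signal vertex of its own type (else the inner sum is at least $\beta\eps^p$, which the choice of $\beta$ makes too expensive), no signal vertex can serve two gadgets of the same curve (it would have to absorb an intervening $g_0^\beta$ block at cost at least $\beta(1-\eps)^p$), and hence the string read off the signal vertices is a common supersequence of $S$ with at least $i$ occurrences of $A$ and $j$ of $B$. Your cross bounds $\dtw_p^q(c^*,A_i)\ge b^{q/p}(1-\eps)^q$ and $\dtw_p^q(c^*,B_j)\ge a^{q/p}(1-\eps)^q$ are also exactly the paper's final contradiction step. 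The gap is in how you compare against $r$. You discard the $\gamma(s)$ terms (``it can only increase the cost on the $\gamma(s)$ curves''), i.e.\ you lower-bound $\sum_{s\in S}\dtw_p^q(c^*,\gamma(s))$ by $0$. Then, to rule out $a\ge i+1$, you would need $\alpha(1-\eps)^q\bigl((i+1)^{q/p}+j^{q/p}\bigr)>r=\sum_{s\in S}(i+j-|s|)^{q/p}+\alpha\bigl(i^{q/p}+j^{q/p}\bigr)$. Since $\alpha=|S|$, this asks the marginal increment $(1-\eps)^q\bigl((i+1)^{q/p}-i^{q/p}\bigr)$ (per copy) to beat a slack of up to $(i+j-1)^{q/p}$ per string; for $p=q$ that is roughly $1$ versus $i+j-1$, so the deduction fails for essentially every nontrivial instance. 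The delicate point is not, as you suggest, ``$\eps$-sized per-vertex savings'' from vertices just outside the signal windows.

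What is missing is the per-string lower bound $\dtw_p^q(c^*,\gamma(s))\ge(1-\eps)^q(i+j-|s|)^{q/p}$, which nearly cancels the slack $\sum_{s}(i+j-|s|)^{q/p}$ inside $r$ and leaves only the controlled error $(1-(1-\eps)^q)(i+j)^{q/p}$ per string---this is where the otherwise opaque choice of $\eps$ actually enters, via $1-(1-\eps)^q=\min_{x\in\{i,j\}}\bigl((x+1)^{q/p}-x^{q/p}\bigr)/\bigl(4(i+j)^{q/p}\bigr)$. Proving that lower bound needs a notion your proposal never introduces: \emph{groups} of signal vertices, namely those matched to the same letter gadget of $A_i$ or $B_j$. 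Distinct groups are separated along $c^*$ by buffer vertices near $g_0$ (some vertex must cover the $g_0^\beta$ block between consecutive gadgets of $A_i$ or $B_j$), and there are at least $i$ A-groups and $j$ B-groups. When $c^*$ is matched against $\gamma(s)$, at most $|s|$ groups can align with its letter gadgets, and each of the remaining $i+j-|s|$ groups forces a matched pair at distance at least $1-\eps$: either one of its signal vertices lands on a $g_0$ vertex, or two groups share a $g_a^\beta$ or $g_b^\beta$ block, in which case the buffer vertex separating them is pressed onto a letter vertex. Counting groups rather than raw signal vertices is essential here, because several signal vertices can ride a single gadget of $\gamma(s)$ almost for free---the actual hazard your accounting must defeat, rather than the one you named. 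Without this step your argument cannot bound the number of signal vertices by $i+j$ (only by roughly twice that), so the character counts $\acount(s^*)=i$ and $\bcount(s^*)=j$ are not established.
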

\begin{proof}
	If $(G\cup R_{i,j},r)$ is a true instance of $(p,q)$-DTW average curve, then there exists a curve $c^*$ such that $\sum_{g\in G\cup R_{i,j}} \dtw_p^q(c^*,g)\leq r$. Take a curve $g\in G\cup R_{i,j}$. First note that there is at least one signal vertex in $c^*$ matched to each letter gadget in $g$: otherwise, matching all $\beta$ vertices in the gadget costs at least $\eps^q\cdot \beta = \eps^q\cdot (r/\eps^q+1)>r$, which contradicts the choice of $c^*$. Similarly, each signal vertex is matched to at most one letter gadget in $g$, since otherwise it would have to match a $g_0^\beta$ subcurve in between the letter gadgets, which would have a cost of at least $(1-\eps)^q\cdot \beta > \eps^q\cdot \beta  >r$. This means that the sequence of letter gadgets in $\gamma(s)$ is a subsequence of the sequence of signal vertices in $c^*$. So, if we construct $s'$ from the sequence of signal vertices in $c^*$ by mapping A-signal vertices to $A$ characters and B-signal vertices to $B$ characters, we have that $s'$ is a supersequence of $S$.  What remains to be proven is that $\acount(s')=i$ and $\bcount(s')=j$, i.e. there are exactly $i$ A-signal vertices and $j$ B-signal vertices.
	
	First, note that the sequence of A letter gadgets in $A_i$ is a subsequence of the sequence of signal vertices in $c^*$ (using the same argument as above), so there are at least $i$ A-signal vertices. Analogously, there are at least $j$ B-signal vertices. Now if we can show that there are at most $i+j$ signal vertices, then we are done. 
	
	Observe that there is at least one buffer vertex within a distance $\eps$ to $g_0$ in between signal vertices that are matched to letter gadget in $A_i$ or $B_j$, as such a vertex must cover a $g_0^\beta$ subcurve between the letter gadgets. We call signal vertices that are matched to the same letter gadget in either $A_i$ or $B_j$ a group. (Note that by definition, a signal vertex cannot be matched to letter gadgets in both $A_i$ and $B_j$) This means that there are at least $i$ groups of A-signal vertices and at least $j$ groups of B-signal vertices.
	
	When matching $c^*$ and $\gamma(s)$ for some $s\in S$, we can only match at most $|s|$ groups of signal vertices to a $g_a$ or $g_b$ vertex in a letter gadget in $\gamma(s)$. So, for the at least $i+j-|s|$ remaining groups of signal vertices, we can either match them to a $g_0$ vertex in $\gamma(s)$, or to a corresponding $g_a$ or $g_b$ vertex. In the latter case, the signal vertex is matched to the same $g_a^\beta$ or $g_b^\beta$ subcurve in $\gamma(s)$ as another signal vertex in a different group. This means that the buffer vertex that separates the two signal vertices is matched to a $g_a$ or $g_b$ vertex in the letter gadget. So in all cases, we match two vertices at distance at leasts $1-\eps$. Since we do this for at least $i+j-|s|$ vertices, $\dtw_p(c^*,\gamma(s))\geq (1-\eps)(i+j-|s|)^{1/p}$.
	
	Now, we have 
	\begin{align*}
	\alpha (\dtw_p^q(c^*,A_i)+\dtw_p^q(c^*,B_j)) &\leq r -\sum_{s\in S} \dtw_p^q(c^*,\gamma(s))\\ 
	&\leq r - \sum_{s\in S} (1-\eps)^q(i+j-|s|)^{q/p}\\
	&= \alpha (i^{q/p}+j^{q/p}) + \sum_{s\in S} (1-(1-\eps)^q)(i+j-|s|)^{q/p}\\ 
	&\leq \alpha (i^{q/p}+j^{q/p}) + (1-(1-\eps)^q)|S|(i+j)^{q/p},
	\end{align*}
	so that $\dtw_p^q(c^*,A_i)+\dtw_p^q(c^*,B_j) \leq i^{q/p}+j^{q/p} + (1-(1-\eps)^q)(i+j)^{q/p} < i^{q/p}+j^{q/p} + \frac{1}{2}\min_{x\in \{i,j\}} (x+1)^{q/p} - x^{q/p}$.
	This means that there are at most $i+j$ signal vertices: suppose there are at least $i+1$ A-signal vertices, then $\dtw_p^q(c^*,A_i)+\dtw_p^q(c^*,B_j)\geq (1-\eps)^q ((i+1)^{q/p}+j^{q/p}) \geq i^{q/p}+j^{q/p} + ((i+1)^{q/p} - i^{q/p})/2$, a contradiction. Analogously, at least $j+1$ B-signal vertices lead to a contradiction. 
\end{proof}

\begin{theorem}\label{thm:DTW-hard}
	The average curve problem for the $(p,q)$-DTW distance is NP-hard, for any $p,q\in \mathbb{N}$. When parametrized in the number of input curves $m$, this problem is W[1]-hard. There exists no $f(n)\cdot n^{o(m)}$ time algorithm for this problem unless ETH fails.
\end{theorem}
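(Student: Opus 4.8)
The plan is to follow the template of the proof of Theorem~\ref{thm:Frechet-hard}: combine the two directions established in Lemmas~\ref{lem:FCCS-to-DTW} and~\ref{lem:pq-dtw-to-FCCS} into a single correct reduction from FCCS to the $(p,q)$-DTW average curve problem, and then push the three hardness consequences through the chain of reductions that Lemma~\ref{lem:SCS-prime} traces back to \textsc{Clique}. Concretely, Lemmas~\ref{lem:FCCS-to-DTW} and~\ref{lem:pq-dtw-to-FCCS} together state that $(S,i,j)$ is a yes-instance of FCCS if and only if $(G\cup R_{i,j},r)$ is a yes-instance of $(p,q)$-DTW average curve, so the map $(S,i,j)\mapsto (G\cup R_{i,j},r)$ is a correct many-one reduction, valid for each fixed pair $p,q\in\mathbb{N}$.

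The three statements then follow once I verify two properties of this reduction: that it is computable in polynomial time, and that it is simultaneously a parametrized reduction in $m$ with a linear bound on the parameter. The number of output curves is $|G|+2\alpha = m+2m = 3m$, which is linear in $m$, so the parameter bound is immediate; each curve has complexity $O((i+j)\beta)$, so the entire question of polynomiality reduces to bounding $\beta=\lceil r/\eps^{q}\rceil+1$.

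I expect this bound on $\beta$ to be the main obstacle, since it is the only place where the intricate choices of $r$, $\eps$, and $\beta$ interact. Because $p$ and $q$ are fixed, $q/p$ is a constant exponent, and since $i,j$ and $|S|$ are polynomial in the input size, $r=\sum_{s\in S}(i+j-|s|)^{q/p}+\alpha(i^{q/p}+j^{q/p})$ is polynomial. It therefore suffices to show that $1/\eps$ is polynomially bounded, where $\eps=1-(1-\theta)^{1/q}$ and $\theta=\min_{x\in\{i,j\}}\frac{(x+1)^{q/p}-x^{q/p}}{4(i+j)^{q/p}}$. I would first lower-bound the gap $(x+1)^{q/p}-x^{q/p}$ by the mean value theorem: it equals $\frac{q}{p}\,\xi^{q/p-1}$ for some $\xi\in(x,x+1)$, which is at least a $1/\mathrm{poly}(i+j)$ fraction of $(i+j)^{q/p}$ in both regimes $q/p\geq 1$ and $q/p<1$; hence $\theta\geq 1/\mathrm{poly}$. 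Using that $t\mapsto(1-t)^{1/q}$ is concave and so lies below its tangent at $0$, namely $1-(1-\theta)^{1/q}\geq \theta/q$ for $\theta\in[0,1]$, I obtain $\eps\geq \theta/q\geq 1/\mathrm{poly}$. Since $q$ is constant, $1/\eps^{q}$ is polynomial, and therefore so are $\beta$ and the total size of the output instance.

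Finally, for the parametrized and ETH claims I would argue exactly as in Theorem~\ref{thm:Frechet-hard}: because the reduction above is a parametrized reduction in $m$ with a linear parameter bound, composing it with the reduction of Lemma~\ref{lem:SCS-prime} yields a parametrized reduction from \textsc{Clique} with a linear bound on the parameter to the $(p,q)$-DTW average curve problem. This gives W[1]-hardness in the number of curves and, by~\cite[Obs.~14.22]{Cygan2015Parametrized}, rules out any $f(m)\cdot n^{o(m)}$-time algorithm unless ETH fails, matching the statement of the theorem (where I read the stated $f(n)\cdot n^{o(m)}$ as $f(m)\cdot n^{o(m)}$, consistent with the FCCS lower bound of Lemma~\ref{lem:SCS-prime}).
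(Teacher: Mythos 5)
Your proposal is correct and follows essentially the same route as the paper's own proof: combine Lemmas~\ref{lem:FCCS-to-DTW} and~\ref{lem:pq-dtw-to-FCCS} into a many-one reduction from FCCS, verify that the reduction is polynomial because $1/\eps$ and hence $\beta$ are polynomially bounded for fixed $p,q$, note that the number of output curves ($3m$) is linear in the number of strings, and then invoke Lemma~\ref{lem:SCS-prime} together with the composed parametrized reduction from \textsc{Clique} for the W[1]- and ETH-based claims. The only difference is that you supply the mean-value-theorem and concavity argument giving $\eps \geq \theta/q \geq 1/\mathrm{poly}$ explicitly, a bound the paper merely asserts in passing, and you correctly read the theorem's $f(n)\cdot n^{o(m)}$ as the intended $f(m)\cdot n^{o(m)}$.
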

\begin{proof}
	By Lemmas~\ref{lem:FCCS-to-DTW} and~\ref{lem:pq-dtw-to-FCCS}, we have a valid reduction from FCCS to the average curve problem.
	Since this reduction runs in polynomial time and FCCS is NP-hard (Lemma~\ref{lem:SCS-prime}), the average curve problem for discrete and continuous Fr\'echet is NP-hard. Since the reduction runs in polynomial time (note that $1/\eps$ can be bounded by a polynomial function in $n$, since $p,q$ are constants, so $\beta$ can be polynomially bounded) and the number of input curves is bounded by a linear function in $|S|$, the claim follows.
\end{proof}
\section[Algorithms for (k,l)-center and -median curve clustering]{Algorithms for $(k,\ell)$-center and -median curve clustering}

\subsection[(1+eps)-approximation for (k,l)-center clustering for discrete Frech\'et in R\^{}d]{$(1+\eps)$-approximation for $(k,\ell)$-center clustering for discrete Frech\'et in $\rr^d$}\label{sec:approx-center}

In this section, we develop a $(1+\eps)$-approximation algorithm for the $(k,\ell)$-center problem under the discrete Fr\'echet distance that runs in $O(mn\log (n))$ time for fixed $k,\ell,\epsilon$. In this algorithm, we use hypercube grids $L_v(a,b)$ around a vertex $v$ of width $a$ and resolution $b$: take the axis-parallel $d$-dimensional hypercube centered at $v$ of side-length $a$. Divide this hypercube into smaller hypercubes of side-length at most $b$. The grid $L_v(a,b)$ is the set of all vertices of the smaller hypercubes that intersect the ball of diameter $a$ around $v$. See figure~\ref{fig:hypergrid} for an example.

The algorithm is as follows:
First, we compute a set of curves $\mathcal{C}=\{c_1,\ldots, c_k\}$ that forms a $3$-approximation for the $(k,\ell)$-center problem, using the algorithm by Buchin et~al.~\cite{KAJFrechetCenter}. Let $\Delta$ be the cost of $\mathcal{C}$. 

\begin{figure}
	\centering
	\includegraphics{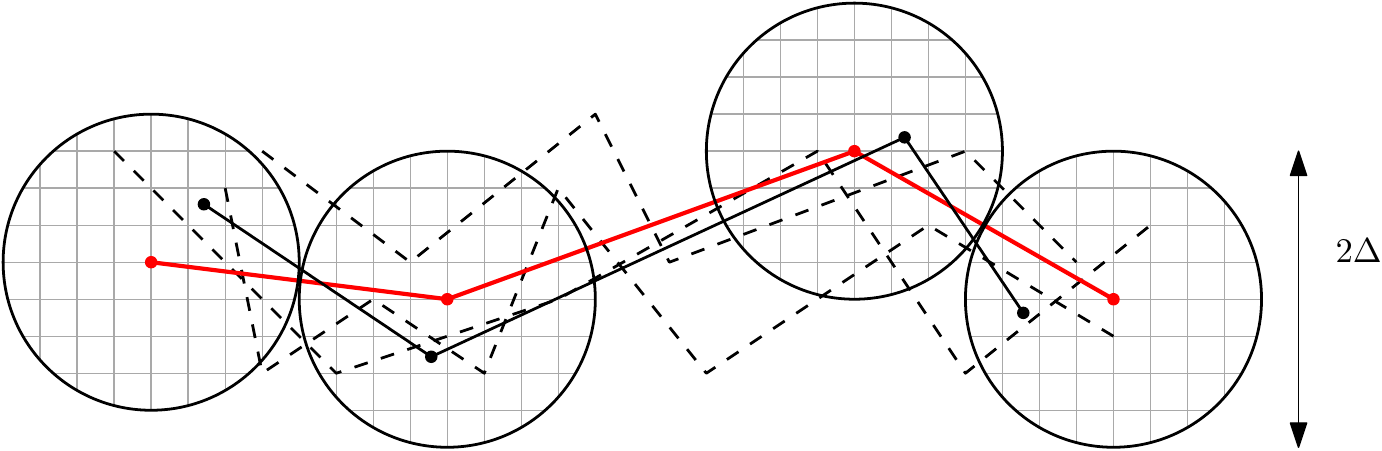}
	\caption{\label{fig:hypergrid} Given an approximate $(1,\ell)$-center curve (red) for a set of curves (dashed), the vertices of the optimal center curve (black) will be close to the hypercube grids around the vertices of the approximate center.}
\end{figure}

Let $V$ be the union of the hypercube grids $L_v(4\Delta, \frac{2\Delta\eps}{3\sqrt{d}})$ over all vertices $v$ of curves in $\mathcal{C}$. For every set of $k$ center curves with complexity $\ell$ using only vertices from $V$, compute the clustering and cost as centers for $G$, and return the set with minimal cost.

In order to show this algorithm gives an $(1+\eps)$-approximation, we use the following lemma to show that there is a set of $k$ center curves that is close enough to the optimal solution:

\begin{lemma}\label{lem:grid-search}
	Let $k,\ell\in \mathbb{N}$, $\delta\in \rr$ and $X>0$. Suppose there are two sets $\mathcal{C}=\{c_1,\ldots,c_k\}$ and $\mathcal{C}^*=\{c^*_1,\ldots,c^*_k\}$, both containing $k$ curves in $\rr^d$ of complexity $\ell$.
	Additionally, suppose that for all curves $c^*\in \mathcal{C}^*$, there exists a curve $c\in \mathcal{C}$ such that $\dfrech(c,c^*)\leq \delta$. Let $V=\{ L_v(2\delta,2\frac{X}{\sqrt{d}})\mid \text{$v$ is a vertex of a curve in $\mathcal{C}$} \}$.
	Then there is a set of curves $\widetilde{\mathcal{C}}=\{\tilde{c}_1,\ldots,\tilde{c}_k\}$, using only vertices from $V$, such that  
	$\dfrech(c^*_i,\tilde{c}_i) \leq X$, $|\tilde{c}_i|= \ell$, for all $1\leq i\leq k$.
\end{lemma}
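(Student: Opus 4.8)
The plan is to prove the lemma constructively: for each optimal center curve $c^*_i$, I would build a grid-snapped curve $\tilde{c}_i$ by rounding each vertex of $c^*_i$ to a nearby grid point of $V$, and then bound the resulting discrete Fr\'echet distance $\dfrech(c^*_i,\tilde{c}_i)$ by the per-vertex rounding error. The key observation is that the discrete Fr\'echet distance is controlled by a vertex-to-vertex matching, so if I snap each vertex of $c^*_i$ individually to a grid point within distance $X$ and keep the two curves in lockstep (matching vertex $t$ of $c^*_i$ to vertex $t$ of $\tilde{c}_i$ via the diagonal warping path), the Fr\'echet distance is at most the maximum per-vertex displacement.

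First I would fix an index $i$ and, using the hypothesis, choose a curve $c\in\mathcal{C}$ with $\dfrech(c,c^*_i)\leq\delta$. This Fr\'echet matching pairs every vertex $c^*_{i,t}$ of $c^*_i$ with some vertex $v$ of $c$ at distance at most $\delta$, so $c^*_{i,t}$ lies in the ball of radius $\delta$ (diameter $2\delta$) around $v$. By the definition of the grid $L_v(2\delta, 2X/\sqrt{d})$, this is precisely the region whose cells are included in the grid, so $c^*_{i,t}$ lies inside a small hypercube of side-length at most $2X/\sqrt{d}$ whose corners are grid points in $V$. Next I would round $c^*_{i,t}$ to the nearest corner $\tilde{c}_{i,t}$ of that enclosing cell; in $d$ dimensions the distance from any point in a cube of side $s$ to its nearest corner is at most $\tfrac{\sqrt{d}}{2}s$, so with $s=2X/\sqrt{d}$ the displacement is at most $\tfrac{\sqrt{d}}{2}\cdot\tfrac{2X}{\sqrt{d}}=X$. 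I would define $\tilde{c}_i$ as the curve with vertices $\tilde{c}_{i,1},\ldots,\tilde{c}_{i,\ell}$, which has complexity exactly $\ell$ and uses only vertices from $V$, as required.

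With the construction in place, I would bound the Fr\'echet distance using the identity (diagonal) warping path $W=\langle(1,1),(2,2),\ldots,(\ell,\ell)\rangle$, which is a valid warping path between two curves of equal complexity $\ell$. Along this path every matched pair $(\,c^*_{i,t},\tilde{c}_{i,t}\,)$ has distance at most $X$ by the rounding bound, so $\dfrech(c^*_i,\tilde{c}_i)\leq\max_t \|c^*_{i,t}-\tilde{c}_{i,t}\|\leq X$. Carrying this out for every $i\in\{1,\ldots,k\}$ yields the desired set $\widetilde{\mathcal{C}}$.

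I expect the only real subtlety to be the grid containment step: I must verify carefully that the point $c^*_{i,t}$, which lies within distance $\delta$ of a vertex $v$ of $c$, is genuinely covered by a cell of $L_v(2\delta, 2X/\sqrt{d})$ that is retained in the grid (the grid keeps exactly the cells intersecting the diameter-$2\delta$ ball around $v$, which is where $c^*_{i,t}$ sits). The rounding-to-nearest-corner bound and the diagonal-warping-path argument are both routine; the main care is in matching the grid's width and resolution parameters to the radius $\delta$ and the target error $X$ so that every optimal vertex has a corner of $V$ within distance $X$.
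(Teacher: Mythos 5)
Your proposal is correct and follows essentially the same route as the paper's proof: snap each vertex of $c^*_i$ to a corner of the enclosing grid cell of $L_v(2\delta, 2X/\sqrt{d})$ around the nearby vertex $v$ of the matched curve $c\in\mathcal{C}$, with per-vertex displacement at most $\frac{\sqrt{d}}{2}\cdot\frac{2X}{\sqrt{d}}=X$. Your explicit use of the diagonal warping path merely spells out the per-vertex matching the paper leaves implicit, so there is no substantive difference.
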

\begin{proof}
	Let $v$ be a vertex of a curve in $\mathcal{C}$, and let $p$ be a point such that $\|p-v\|\leq \delta$. Then $p$ lies inside one of the small hypercubes and so there is a vertex $p'\in L_v(2\delta,2\frac{X}{\sqrt{d}})$ (a vertex of that small hypercube) such that $\|p-p'\|\leq \frac{\sqrt{d}}{2}\cdot \frac{2X}{\sqrt{d}}= X$.
	
	Let $c^*\in \mathcal{C}^*$. There exists a curve $c\in \mathcal{C}$ with $\dfrech(c,c^*)\leq \delta$, which means that each vertex $u$ of $c^*$ has distance at most $\delta$ to some vertex $v$ of $c$. So, there exists a vertex $v'\in L_v(2\delta,2\frac{X}{\sqrt{d}})$ such that $\|u-v'\|\leq X$. Construct the curve $\tilde{c}$ by connecting all such vertices $v'$ by line segments. By construction, $\dfrech(\tilde{c},c^*)\leq \delta$, $|\tilde{c}|=\ell$, and all vertices of $\tilde{c}$ are in $V$. So, we can take $\widetilde{\mathcal{C}}=\{\tilde{c}\mid c^*\in \mathcal{C}^*\}$.
\end{proof}

\begin{theorem}\label{thm:Frechet-center-approx}
	Given $m$ input curves in $\rr^d$, each of complexity at most $n$, and positive integers $k,\ell$ and some $0<\eps\leq 1$, we can compute an $(1+\eps)$-approximation to the $(k,\ell)$-center problem for the discrete Fréchet distance in $O\left(((C k\ell)^{k\ell}+\log(\ell+n))\cdot k\ell \cdot mn\right)$ time, with $C= \left(\frac{6\sqrt{d}}{\eps}+1\right)^d$.
\end{theorem}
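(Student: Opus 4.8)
The plan is to combine the structural guarantee of Lemma~\ref{lem:grid-search} with a triangle-inequality argument relating the unknown optimal solution to the $3$-approximation $\mathcal{C}$ computed in the first step. Write $\mathcal{C}^*=\{c^*_1,\dots,c^*_k\}$ for an optimal $(k,\ell)$-center solution of cost $\Delta^*$, so that the cost $\Delta$ of $\mathcal{C}$ satisfies $\Delta^*\leq\Delta\leq 3\Delta^*$. The first thing I would establish is that every optimal center lies within discrete Fréchet distance $2\Delta$ of some approximate center: for each nonempty optimal cluster pick a representative input curve $g$; since $\dfrech(g,c^*_i)\leq\Delta^*$ and $g$ is within $\Delta$ of its nearest center of $\mathcal{C}$, the triangle inequality together with $\Delta^*\leq\Delta$ gives $\dfrech(c^*_i,c_j)\leq\Delta^*+\Delta\leq 2\Delta$ for some $j$. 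Empty optimal clusters are handled by replacing their center with a copy of some curve of $\mathcal{C}$, which changes neither $\Delta^*$ nor the complexity bound, so the hypothesis holds for all of $\mathcal{C}^*$.

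Next I would invoke Lemma~\ref{lem:grid-search} with $\delta=2\Delta$ and $X=\Delta\eps/3$. With these values the grids $L_v(2\delta,2\frac{X}{\sqrt d})$ of the lemma coincide exactly with the grids $L_v(4\Delta,\frac{2\Delta\eps}{3\sqrt d})$ whose union is the candidate vertex set $V$ used by the algorithm. The lemma then yields a set $\widetilde{\mathcal{C}}=\{\tilde c_1,\dots,\tilde c_k\}$, each of complexity $\ell$ and using only vertices of $V$, with $\dfrech(c^*_i,\tilde c_i)\leq\Delta\eps/3$ for all $i$. To bound the cost of $\widetilde{\mathcal{C}}$, take any input curve $g$ and let $c^*_i$ be the optimal center it is assigned to; then $\dfrech(g,\tilde c_i)\leq\dfrech(g,c^*_i)+\dfrech(c^*_i,\tilde c_i)\leq\Delta^*+\Delta\eps/3$, which by $\Delta\leq 3\Delta^*$ is at most $(1+\eps)\Delta^*$. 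Thus $\widetilde{\mathcal{C}}$ is a feasible candidate of cost at most $(1+\eps)\Delta^*$, and since the algorithm enumerates every set of $k$ curves of complexity $\ell$ with vertices in $V$ and returns the best, its output has cost at most $(1+\eps)\,\mathrm{OPT}$.

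For the running time I would count the candidates. Each grid contains at most $(\tfrac{6\sqrt d}{\eps}+1)^d=C$ vertices, since subdividing a cube of side $4\Delta$ into cubes of side $\tfrac{2\Delta\eps}{3\sqrt d}$ uses $\tfrac{6\sqrt d}{\eps}$ cells per axis; as $\mathcal{C}$ has $k\ell$ vertices in total, $|V|\leq Ck\ell$. Hence there are at most $|V|^{k\ell}\leq(Ck\ell)^{k\ell}$ candidate center sets, and evaluating one—computing $\dfrech$ between each of the $k$ centers of complexity $\ell$ and each of the $m$ input curves of complexity $n$, then taking cluster minima and the overall maximum—takes $O(k\ell mn)$ time. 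Adding the $\widetilde{O}(k\ell mn)=O(\log(\ell+n)\cdot k\ell mn)$ cost of the initial $3$-approximation of~\cite{KAJFrechetCenter} gives the claimed bound $O\big(((Ck\ell)^{k\ell}+\log(\ell+n))\cdot k\ell\cdot mn\big)$.

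The main obstacle I anticipate is not any single calculation but getting the chain of constants to close up cleanly: the $2\Delta$ bound between optimal and approximate centers must match the grid width $4\Delta=2\delta$, and the resolution must be fine enough that the residual error $X=\Delta\eps/3$, once amplified through $\Delta\leq 3\Delta^*$, produces exactly $\eps\Delta^*$ rather than a larger multiple. A secondary subtlety is the careful handling of empty clusters so that the hypotheses of Lemma~\ref{lem:grid-search} hold for all of $\mathcal{C}^*$, together with the implicit assumption that centers have complexity exactly $\ell$; shorter optimal centers can be padded with repeated vertices without changing any discrete Fréchet distance, so this is harmless.
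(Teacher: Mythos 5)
Your proposal is correct and follows essentially the same route as the paper's proof: bound $\dfrech(c^*_i,c_j)\leq 2\Delta$ via a representative input curve and the triangle inequality, invoke Lemma~\ref{lem:grid-search} with $\delta=2\Delta$ and $X=\eps\Delta/3\leq\eps\,\mathrm{OPT}$, and count $|V|\leq Ck\ell$ grid candidates evaluated at $O(k\ell mn)$ each. Your explicit handling of empty optimal clusters and of padding shorter centers to complexity exactly $\ell$ are minor refinements the paper glosses over with a ``without loss of generality,'' not a different approach.
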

\begin{proof}
	We first show that the algorithm above achieves this approximation ratio. Let $\mathcal{C}^*$ be an optimal optimal solution for the $(k,\ell)$-center problem, and $O$ its cost. Let $c^*\in \mathcal{C}^*$, then there is a curve $g\in G$ such that that $\dfrech(c^*,g)\leq O$ (assuming without loss of generality that its cluster is non-empty). Since the solution $\mathcal{C}$ has cost $\Delta$, there is a $c\in \mathcal{C}$ such that $\dfrech(c,g)\leq \Delta$. So, $\dfrech(c,c^*)\leq \dfrech(c,g)+\dfrech(g,c^*)\leq 2\Delta$, and by Lemma~\ref{lem:grid-search} with $\delta=2\Delta$ and $X=\eps\cdot \Delta/3 \leq \eps O$, there is a solution $\widetilde{\mathcal{C}}$ with the properties in the Lemma. Since for any $g\in G$, there is a curve $c^*\in \mathcal{C}^*$ such that $\dfrech(g,c^*)\leq O$, there is a $\tilde{c}\in \widetilde{\mathcal{C}}$ such that $\dfrech(g,\tilde{c})\leq \dfrech(g,c^*)+\dfrech(\tilde{c},c^*)\leq (1+\eps)O$. Since the algorithm returns the best solution using only from $V$, it returns a solution of cost at most that of $\widetilde{\mathcal{C}}$, and is therefore an $1+\eps$-approximation.
	
	For the running time, computing the $3$-approximation $\mathcal{C}$ takes $O(k\ell mn\log (\ell+n))$ time~\cite{KAJFrechetCenter}. A grid $L_v(a,b)$ has at most $(\lceil\frac{a}{b}\rceil+1)^d$ vertices and the curves in $\mathcal{C}$ have at most $k\ell$ vertices, so $|V|\leq k\ell (\lceil\frac{6\sqrt{d}}{\eps}\rceil+1)^d$. There are $O(|V|^{k\ell})$ solutions using only vertices from $V$, and we can test each solution in $O(k\ell mn)$ time: computing the discrete Fr\'echet distance between an input curve and a center curve takes $O(\ell n)$ time using dynamic programming, which we do for all $km$ pairs of input and center curves. In total, we get a running time of $O\left((|V| ^{k\ell}+\log(\ell+n))\cdot k\ell mn\right)$.
\end{proof}
Note that we can use any $\alpha$-approximation algorithm instead of the $3$-approximation algorithm by Buchin~et~al.~\cite{KAJFrechetCenter}, if we scale the grids accordingly. This changes the value of $C$ to $\left(\frac{2\alpha\sqrt{d}}{\eps}+1\right)^d$. If $\eps$ is very small, we can use this to get a smaller $C$ constant by running our algorithm twice, first computing a $1.01$-approximation, and using the approximation to compute the $(1+\eps)$-approximation.

When $\eps$ and $d$ are fixed constants, the algorithm from Theorem~\ref{thm:Frechet-center-approx} is fixed parameter tractable in the parameter $k+\ell$. In this setting, there is no $(1+\eps)$-approximation algorithm that is fixed parameter tractable in either $k$ or $\ell$ separately (the problem is not even in XP, in fact), unless $\textsc{P}=\textsc{NP}$. If we do not fix $\ell$, then achieving an approximation factor strictly better than $2$ is already NP-hard when $k=1$ and $d=1$~\cite{KAJFrechetCenter}. If we do not fix $k$ and if $\ell=1$, the $(k,\ell)$-center problem for discrete Fr\'echet is equivalent to the Euclidean $k$-center problem, which is NP-hard to approximate within a factor of $1.82$ for $d\geq 2$~\cite{feder1988optimal}.

\subsection[Approximation algorithms for (k,l)-median clustering for the discrete Fr\'echet distance in R\^{}d]{Approximation algorithms for $(k,\ell)$-median clustering for the discrete Fr\'echet distance in $\rr^d$}\label{sec:approx-k-l-median}

We construct an $(1+\eps)$-approximation for the $(k,\ell)$-median problem for the discrete Fr\'echet distance with a similar approach as above: first compute an constant factor approximation, and then search in hypercube grids around the vertices of that approximation. The algorithm for the constant factor approximation is essentially the same as the approximation algorithm from~\cite{driemel2016clustering} for 1D curves, except we use different subroutines and derive a tighter approximation bound.
We first introduce some techniques we will use to get a $12$-approximation. Given a polygonal curve $\gamma$, a \emph{simplification} is a polygonal curve that is similar to $\gamma$, but has only a few vertices. Specifically, \emph{a minimum error $\ell$-simplification} $\bar{\gamma}$ of a curve $\gamma$ is a curve of complexity at most $\ell$ that has a minimum distance to $\gamma$ among all curves with complexity at most $\ell$. We can compute a minimum error $\ell$-simplification under the discrete Fr\'echet distance for a curve $\gamma$ of complexity $n$ in $O(n\ell \log n \log (n/\ell)$ time~\cite{bjwyz-spcdfd-08}.

The $12$-approximation algorithms goes as follows: First, compute a minimum error $\ell$-simplification $\bar{g}$ for each input curve $g$ and let $\overline{G}$ be the set of all simplified curves. Then, compute a $4$-approximation for the $k$-median problem with $F=\overline{G}$ and $C=G$, using the algorithm by Jain~et~al.~\cite{JainMS02}. This yields a $12$-approximation:

\begin{theorem}\label{thm:12-approx-k,l-median}
	Given $m$ input curves in $\rr^d$, each of complexity at most $n$, and positive integers $k,\ell$, we can compute a $12$-approximation to the $(k,\ell)$-median problem for the discrete Fréchet distance in $O(m^3 + mn\ell(m+\log n \log (n/\ell)))$ time.
\end{theorem}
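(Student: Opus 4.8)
The plan is to verify three things about the stated two-phase algorithm: that its output is a feasible $(k,\ell)$-median solution, that its cost is at most $12$ times the optimum, and that it runs within the claimed time. Feasibility is immediate, because every curve in $\overline{G}$ is an $\ell$-simplification and hence has complexity at most $\ell$, so the $k$ facilities returned by the Jain~et~al.\ routine form a valid set of center curves. The approximation factor I would obtain by decomposing $12=3\cdot 4$: first show that restricting the center curves to the simplification set $\overline{G}$ costs only a factor $3$ over the true $(k,\ell)$-median optimum, and then invoke the $4$-approximation guarantee of Jain~et~al.\ for the resulting discrete $k$-median instance, whose metric is $\dfrech$ restricted to $G\cup\overline{G}$ (a valid metric, so the triangle inequality that their analysis needs is available).

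The heart of the argument is the factor-$3$ bound, which I would prove by exhibiting an explicit set of facilities in $\overline{G}$. Let $\mathcal{C}^{*}=\{c_1^{*},\ldots,c_k^{*}\}$ be an optimal $(k,\ell)$-median solution of cost $O$, and let $G_i\subseteq G$ be the curves assigned to $c_i^{*}$. For each cluster choose a representative $g^{(i)}=\arg\min_{g\in G_i}\dfrech(c_i^{*},g)$ and take its simplification $\overline{g^{(i)}}\in\overline{G}$ as the candidate center. The key inequality is the simplification bound $\dfrech(\overline{g^{(i)}},g^{(i)})\le\dfrech(c_i^{*},g^{(i)})$, which holds because $c_i^{*}$ is itself a curve of complexity at most $\ell$ and $\overline{g^{(i)}}$ is a \emph{minimum-error} $\ell$-simplification of $g^{(i)}$. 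Combining this with the triangle inequality and the minimality of $g^{(i)}$ yields, for every $g\in G_i$,
\begin{align*}
\dfrech(\overline{g^{(i)}},g)
&\le \dfrech(\overline{g^{(i)}},g^{(i)})+\dfrech(g^{(i)},c_i^{*})+\dfrech(c_i^{*},g)\\
&\le 2\dfrech(c_i^{*},g^{(i)})+\dfrech(c_i^{*},g)
\le 3\,\dfrech(c_i^{*},g).
\end{align*}
Summing over all clusters and all curves shows that the centers $\{\overline{g^{(i)}}\}_i$, under the same assignment as $\mathcal{C}^{*}$, achieve cost at most $3O$; since the discrete $k$-median optimum over facilities $\overline{G}$ is no larger than the cost of this particular feasible set, it too is at most $3O$, and the $4$-approximation therefore returns a solution of cost at most $12O$.

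For the running time, the dominant contributions are the three phases. Computing a minimum-error $\ell$-simplification of each of the $m$ curves takes $O(mn\ell\log n\log(n/\ell))$ time via~\cite{bjwyz-spcdfd-08}. Building the $m\times m$ distance matrix for the discrete $k$-median instance requires one discrete Fr\'echet computation per pair consisting of a client (complexity $\le n$) and a facility (complexity $\le\ell$), each costing $O(n\ell)$ by dynamic programming, for a total of $O(m^2 n\ell)$. Finally, running the Jain~et~al.\ $k$-median algorithm on this explicit metric contributes $O(m^3)$. Adding these gives $O(m^3+m^2n\ell+mn\ell\log n\log(n/\ell))=O(m^3+mn\ell(m+\log n\log(n/\ell)))$, as claimed.

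The main obstacle---really the only non-routine point---is the simplification bound $\dfrech(\overline{g^{(i)}},g^{(i)})\le\dfrech(c_i^{*},g^{(i)})$ together with the accompanying choice of the \emph{nearest} cluster member as representative; everything else is triangle-inequality bookkeeping. One must make sure the representative is chosen to minimise $\dfrech(c_i^{*},\cdot)$ so that the $2\dfrech(c_i^{*},g^{(i)})$ term can be absorbed into $\dfrech(c_i^{*},g)$, yielding exactly the constant $3$ rather than something larger, and one must confirm that the exhibited facilities genuinely lie in $\overline{G}$ so that the factor-$3$ detour is legitimate for the discrete instance that Jain~et~al.\ actually solve.
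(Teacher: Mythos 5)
Your proposal is correct and matches the paper's own proof essentially step for step: the same $12=3\cdot 4$ decomposition, the same choice of the cluster member nearest to the optimal center $c_i^*$, the same key inequality $\dfrech(\overline{g^{(i)}},g^{(i)})\le\dfrech(c_i^{*},g^{(i)})$ exploiting that $c_i^*$ itself has complexity at most $\ell$, and the same three-phase running-time accounting. Nothing to add or correct.
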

\begin{proof}
	We first show the approximation ratio. Let $\mathcal{C}^*$ be the optimal solution to the $(k,\ell)$-median problem with cost $O$, and let $\mathcal{C}$ be the solution computed by our algorithm above. Each center curve $c^*_i$ has a set $G^*_i\subseteq G$ as its cluster. Let $c_i'$ be the minimum error $\ell$-simplification of a curve $c_i$ from $G^*_i$ that has minimum distance to $c_i^*$. The curves $\mathcal{C}'=\{c_1',\ldots, c_k'\}$ are a $3$-approximation to the $(k,\ell)$-median problem: we have $\sum_{g\in G} \min_{i=1}^k \dfrech(g,c_i') \leq \sum_{i=1}^k \sum_{g\in G^*_i} \dfrech(g,c_i') \leq \sum_{i=1}^k \sum_{g\in G^*_i} \dfrech(g,c^*_i) + \dfrech(c^*_i,c_i) + \dfrech(c_i,c_i') \leq 3 \sum_{i=1}^k\sum_{g\in G^*_i} \dfrech(g,c^*_i) = 3O$, where $\dfrech(c_i',c_i)\leq \dfrech(c^*_i,c_i)$ because $|c^*_i|=\ell$ and $c_i'$ is a minimum error $\ell$-simplification of $c_i$, and $\dfrech(c_i,c^*_i)\leq \dfrech(g,c^*_i)$ for all $g\in G^*_i$ by definition of $c_i$.
	$\mathcal{C}'$ is some solution to the $k$-median problem with $F=\overline{G}$ and $C=G$ of cost at most $3O$, so the optimal solution to this problem has cost at most $3O$. Since we compute a $4$-approximation for that problem, the result has cost at most $12O$.
	
	For the running time, note that computing the simplification of all curves in $G$ takes $O(mn\ell \log n \log (n/\ell))$ time. Then, we can compute the discrete Fr\'echet distances between pairs from $\overline{G}\times G$ in $O(m^2\cdot \ell n)$ time, and run the algorithm by Jain~et~al.~\cite{JainMS02} in $O(m^3)$ time.
\end{proof}

We can modify the algorithm above to run in $\widetilde{O}(mn)$ time when $k,\ell$ are constant: Compute $\overline{G}$ as before, but now use the algorithm by Chen~\cite{Chencoresets09} to compute a $10.5$-approximation to the $k$-median problem with $F=C=\overline{G}$. This gives a $42$-approximation:
\begin{lemma}\label{lem:12-approx-k,l-median}
	Given $m$ input curves in $\rr^d$, each of complexity at most $n$, and positive integers $k,\ell$, we can compute a $42$-approximation to the $(k,\ell)$-median problem for the discrete Fréchet distance in $O(mn\ell \log n \log (n/\ell) + \ell^2(mk+k^7\log^5 m))$ time.
\end{lemma}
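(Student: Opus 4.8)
The plan is to adapt the proof of Theorem~\ref{thm:12-approx-k,l-median} almost verbatim, changing only the discrete $k$-median instance and the subroutine: instead of Jain~et~al.\ on the instance with facilities $\overline{G}$ and clients $G$, I would run Chen's $10.5$-approximation on the instance with $F=C=\overline{G}$. The first step is to reuse the witness solution $\mathcal{C}'=\{c_1',\ldots,c_k'\}$ from that proof, where $c_i'$ is the minimum error $\ell$-simplification of the curve $c_i\in G_i^*$ closest to the optimal center $c_i^*$. Since each $c_i'$ is itself a simplification of an input curve, we have $\mathcal{C}'\subseteq\overline{G}$, so $\mathcal{C}'$ is feasible for the $F=\overline{G}$ instance. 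I would then bound its cost against the simplified client set: assigning each $\bar{g}\in\overline{G}$ to the optimal cluster $i(g)$ of its preimage $g$ and writing $\dfrech(\bar{g},c_{i(g)}')\leq\dfrech(\bar{g},g)+\dfrech(g,c_{i(g)}')$, the first term is the simplification error (at most $\dfrech(g,c_{i(g)}^*)$, since $c_{i(g)}^*$ has complexity $\leq\ell$) and the second is at most $3\dfrech(g,c_{i(g)}^*)$ by the same triangle-inequality chain as in Theorem~\ref{thm:12-approx-k,l-median}. Summing over $g$ shows the optimum of the $F=C=\overline{G}$ instance is at most $4O$, where $O$ is the optimal $(k,\ell)$-median cost.

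With that bound in hand, Chen's algorithm returns $\widetilde{\mathcal{C}}\subseteq\overline{G}$ whose cost on $\overline{G}$ is at most $10.5\cdot 4\,O=42\,O$; as the curves in $\widetilde{\mathcal{C}}$ are simplifications they have complexity $\leq\ell$ and hence form a feasible $(k,\ell)$-median solution. The remaining step is to transfer this guarantee from the simplified clients back to the original input: for each $g$ I would use $\min_i\dfrech(g,\tilde{c}_i)\leq\dfrech(g,\bar{g})+\min_i\dfrech(\bar{g},\tilde{c}_i)$ and sum, invoking $\sum_{g\in G}\dfrech(g,\bar{g})\leq O$ since the total simplification error is at most the optimal cost.

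For the running time, the minimum error $\ell$-simplifications of all $m$ curves are computed in $O(mn\ell\log n\log(n/\ell))$ time as in Theorem~\ref{thm:12-approx-k,l-median}, and running Chen's $k$-median routine on the $m$ simplified curves, each of complexity at most $\ell$, accounts for the $\ell^2(mk+k^7\log^5 m)$ term; the $\ell^2$ factor is the cost of one discrete Fr\'echet evaluation between curves of complexity at most $\ell$.

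I expect the main obstacle to be the precise constant rather than any individual geometric estimate. Unlike the $12$-approximation, where the clients are the original curves and the reported $k$-median cost already coincides with the $(k,\ell)$-median cost, here the clients are the simplifications $\overline{G}$, so the output cost must be transferred back to $G$. The clean product $10.5\times 4=42$ bounds only the cost measured on $\overline{G}$, and the transfer above contributes the simplification error $\sum_{g}\dfrech(g,\bar{g})\leq O$ on top; meeting exactly the stated factor therefore hinges on charging this error carefully — for instance by sharpening the bound on the $\overline{G}$-optimum so that the simplification error is not paid a second time. This bookkeeping, and not any hard inequality, is where I would focus the care.
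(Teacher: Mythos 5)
Your proposal follows the same route as the paper's own proof: simplify every input curve, run Chen's $10.5$-approximation on the discrete $k$-median instance with $F=C=\overline{G}$, and certify that this instance's optimum is at most $4O$ via a witness built from simplifications of curves in the optimal clusters. (The paper's witness picks $\bar{c}_i$ as the simplification in cluster $i$ whose distance to $c_i^*$ is minimal, giving $\dfrech(\bar{g},\bar{c}_i)\leq 2\dfrech(\bar{g},c_i^*)\leq 4\dfrech(g,c_i^*)$; your $c_i'$, reused from Theorem~\ref{thm:12-approx-k,l-median}, gives the same pointwise bound $4\dfrech(g,c_i^*)$ by the chain $\dfrech(\bar{g},c_i')\leq\dfrech(\bar{g},g)+\dfrech(g,c_i')\leq\dfrech(g,c_i^*)+3\dfrech(g,c_i^*)$. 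Both are fine.) The running-time accounting also matches. The one point where you diverge is the final transfer of the guarantee from the simplified clients back to $G$ --- and here you have put your finger on something real: the paper's proof never performs this step. Its displayed chain bounds only $\sum_{g\in G}\min_i\dfrech(\bar{g},\bar{c}_i)$ and concludes ``$10.5\cdot 4=42$'', so the stated factor is for the cost measured on $\overline{G}$, not on $G$. Your transfer $\min_i\dfrech(g,\tilde{c}_i)\leq\dfrech(g,\bar{g})+\min_i\dfrech(\bar{g},\tilde{c}_i)$ together with $E:=\sum_{g\in G}\dfrech(g,\bar{g})\leq O$ is the correct way to state the guarantee for the $(k,\ell)$-median objective, and it yields $43O$; the sharpening you hope for does not materialize from the inequalities in play, since pointwise one only has $\dfrech(\bar{g},\bar{c}_i)\leq 2\dfrech(\bar{g},g)+2\dfrech(g,c_i^*)$, whence the total is at most $\min\bigl(E+42O,\,22E+21O\bigr)\leq 43O$, tight when $E$ is close to $O$. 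So your argument is sound and, if anything, more careful than the paper's: along this route the honest constant is $43$, and the paper's $42$ is an artifact of stopping at the $\overline{G}$-cost. The discrepancy is immaterial downstream, since Theorem~\ref{thm:Frechet-median-approx} uses this lemma only as a constant-factor bootstrap whose constant merely rescales the hypercube grids, but you were right to flag the bookkeeping rather than trust the clean product.
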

\begin{proof}
    The proof is similar to Theorem~\ref{thm:12-approx-k,l-median}, but now simplifications are clustered instead of the original curves. We first show the approximation ratio. Given a cluster $G^*_i\subset \overline{G}$ from the optimal clustering with center $c_i^*$, let $\bar{c}_i$ be the simplification of a curve $g$ in this cluster such that $\dfrech(\bar{c}_i,c_i^*)$ is minimal. The curves $\overline{\mathcal{C}}=\{\bar{c}_1,\ldots, \bar{c}_k'\}$ are a $4$-approximation to the $(k,\ell)$-median problem: we have $\sum_{g\in G} \min_{i=1}^k \dfrech(\bar{g},\bar{c}_i) \leq \sum_{i=1}^k \sum_{g\in G^*_i} \dfrech(\bar{g},\bar{c}_i) \leq \sum_{i=1}^k \sum_{g\in G^*_i} \dfrech(\bar{g},c^*_i) + \dfrech(c^*_i,\bar{c}_i) \leq \sum_{i=1}^k\sum_{g\in G^*_i} 2\dfrech(\bar{g},c^*_i) \leq 2\sum_{i=1}^k\sum_{g\in G^*_i} \dfrech(\bar{g},g)+\dfrech(g,c^*_i) \leq 2 \sum_{i=1}^k\sum_{g\in G^*_i} 2\dfrech(g,c^*_i) = 4O$, where $\dfrech(\bar{c}_i,c^*_i)\leq \dfrech(\bar{g},c^*_i)$ by definition of $\bar{c}_i$ and $\dfrech(\bar{g},g) \leq \dfrech(g,c^*_i)$ because $|c^*_i|=\ell$ and $\bar{g}$ is a minimum error $\ell$-simplification of $g$. Since we compute a $10.5$-approximation to the problem for which $\overline{\mathcal{C}}$ is a solution, the approximation ratio $10.5\cdot 4 = 42$.
    
    Computing the simplification of all curves in $G$ takes $O(mn\ell \log n \log (n/\ell))$ time. The algorithm by Chen~\cite{Chencoresets09} takes $O(mk+k^7\log^5 m)$ time, so it uses at most that number of distance computations between curves in $\overline{G}$, which take $O(\ell^2)$ time each.
\end{proof}

We now use the $42$-approximation algorithm to compute an $(1+\eps)$-approximation. Let $\mathcal{C}=\{c_1,\ldots, c_k\}$ be the solution given by the approximation algorithm above, and $\Delta$ its cost. If $k=1$, let $V$ be the union of the hypercube grids $L_v(4\Delta/m, \frac{\eps\Delta}{21m\sqrt{d}})$ over all vertices $v$ of curves in $\mathcal{C}$. If $k>1$, let $V$ be the union of the grids $L_v(4\Delta, \frac{\eps\Delta}{21m\sqrt{d}})$ over the same vertices, instead. For every set of $k$ center curves with complexity $\ell$ using only vertices from $V$, compute the clustering and cost (using the median objective) as centers for $G$, and return the set with minimal cost.

\begin{theorem}\label{thm:Frechet-median-approx}
	Given $m$ input curves in $\rr^d$, each of complexity at most $n$, and positive integers $k,\ell$ and some $0<\eps\leq 1$, we can compute an $(1+\eps)$-approximation to the $(k,\ell)$-center problem for the discrete Fréchet distance in $O\left( mn\ell ((C \ell)^{\ell} + \log n \log (n/\ell)) \right)$ time when $k=1$ with $C=\left(\frac{84\sqrt{d}}{\eps}\right)^d$. When $k>1$, we require\\  $O\left((C k\ell)^{k\ell}\cdot k\ell \cdot m^{dk\ell+1}n + mn\ell \log n \log (n/\ell) + \ell^2(mk+k^7\log^5 m)\right)$ time.
\end{theorem}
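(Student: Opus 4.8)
The plan is to mirror the proof of Theorem~\ref{thm:Frechet-center-approx}, replacing the max objective by the sum (median) objective and feeding Lemma~\ref{lem:grid-search} the right parameters. Fix an optimal $(k,\ell)$-median solution $\mathcal{C}^*=\{c^*_1,\ldots,c^*_k\}$ of cost $O$, and let $\mathcal{C}=\{c_1,\ldots,c_k\}$ with cost $\Delta$ be the $42$-approximation from Lemma~\ref{lem:12-approx-k,l-median}; since $\mathcal{C}$ is feasible and $\mathcal{C}^*$ optimal we have $O\leq\Delta\leq 42O$. The goal is to produce, for each optimal center, a nearby grid curve $\tilde{c}_i$, and then show that routing every input curve to the $\tilde{c}_i$ corresponding to its optimal center increases the total cost by at most a $(1+\eps)$ factor.

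First I would establish the hypothesis of Lemma~\ref{lem:grid-search}: for every $c^*_i$ there is some $c_j\in\mathcal{C}$ with $\dfrech(c_j,c^*_i)\leq\delta$, where $2\delta$ is the side-length of the grids used. For $k>1$ (side-length $4\Delta$, so $\delta=2\Delta$), pick any $g$ in the (w.l.o.g.\ nonempty) optimal cluster of $c^*_i$; then $\dfrech(g,c^*_i)\leq O$ and $g$ is matched in $\mathcal{C}$ to some $c_j$ with $\dfrech(g,c_j)\leq\Delta$, so the triangle inequality gives $\dfrech(c_j,c^*_i)\leq\Delta+O\leq 2\Delta=\delta$. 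For $k=1$ (side-length $4\Delta/m$, so $\delta=2\Delta/m$) the cruder bound is too weak, so instead I would use an averaging argument: since $\sum_{g\in G}(\dfrech(g,c_1)+\dfrech(g,c^*_1))=\Delta+O$, some $g$ realises $\dfrech(g,c_1)+\dfrech(g,c^*_1)\leq(\Delta+O)/m\leq 2\Delta/m$, whence $\dfrech(c_1,c^*_1)\leq 2\Delta/m=\delta$.

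Applying Lemma~\ref{lem:grid-search} with $X=\eps\Delta/(42m)$, which is exactly the resolution encoded by the grids $L_v(\cdot,\eps\Delta/(21m\sqrt{d}))$, yields a set $\widetilde{\mathcal{C}}=\{\tilde{c}_1,\ldots,\tilde{c}_k\}$ of complexity-$\ell$ curves over $V$ with $\dfrech(c^*_i,\tilde{c}_i)\leq X$. To bound its median cost, route each $g\in G$ to the index $i$ whose optimal center $c^*_i$ is nearest to $g$; then $\dfrech(g,\tilde{c}_i)\leq\dfrech(g,c^*_i)+X$, and summing over $G$ gives $\sum_{g}\min_i\dfrech(g,\tilde{c}_i)\leq O+mX=O+\eps\Delta/42\leq(1+\eps)O$, using $\Delta\leq 42O$. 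As the algorithm returns the cheapest center set drawn from $V$, its output costs at most that of $\widetilde{\mathcal{C}}$, giving the claimed approximation ratio.

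It remains to account for the running time, which is routine. Each grid $L_v(a,b)$ contributes $O((a/b+1)^d)$ vertices; with $a/b=84\sqrt{d}/\eps$ for $k=1$ and $a/b=84m\sqrt{d}/\eps$ for $k>1$, and $k\ell$ source vertices, one gets $|V|=O(\ell C)$ respectively $|V|=O(k\ell m^d C)$ with $C=(84\sqrt{d}/\eps)^d$. Enumerating all $|V|^{k\ell}$ candidate center sets and evaluating the median cost of each in $O(k\ell mn)$ time (computing $\dfrech$ by the $O(\ell n)$ dynamic program for all $km$ pairs), then adding the cost of Lemma~\ref{lem:12-approx-k,l-median}, yields the two stated bounds. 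The only genuinely delicate point is the proximity step: matching the grid width to the objective so that $\dfrech(c_j,c^*_i)\leq\delta$ holds while simultaneously $mX\leq\eps O$. This is why the $k=1$ case needs both a finer grid (width $4\Delta/m$) and the simultaneous-closeness averaging argument, rather than the direct triangle-inequality bound that suffices for $k>1$.
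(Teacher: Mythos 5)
Your proposal is correct and follows essentially the same route as the paper's proof: the same reduction to Lemma~\ref{lem:grid-search} with $\delta=2\Delta/m$ (via averaging) for $k=1$ and $\delta=2\Delta$ (via a curve shared between an optimal and an approximate cluster) for $k>1$, the same per-curve rerouting bound $O+mX\leq(1+\eps)O$, and the same enumeration-based running-time analysis. If anything, your choice $X=\eps\Delta/(42m)$ is the one consistent with the stated grid resolution $\eps\Delta/(21m\sqrt{d})$ and the $42$-approximation of Lemma~\ref{lem:12-approx-k,l-median}, whereas the paper's proof writes $X=\eps\Delta/(12m)$, an apparent leftover from the $12$-approximation of Theorem~\ref{thm:12-approx-k,l-median}.
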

\begin{proof}
	We first show the approximation ratio. Let $\mathcal{C}^*=\{c_1^*,\ldots, c_k^*\}$ be an optimal solution for the $(k,\ell)$-median problem, $G^*_i\subset G$ the cluster induced by the center $c_i^*$, and $O$ the total cost of this solution. Let $\widetilde{\mathcal{C}}= \{\tilde{c}_1, \ldots, \tilde{c}_k\}$ be a set of curves with complexity at most $\ell$ such that for all $1\leq i \leq k$, there is a curve $\tilde{c}_{j}\in \widetilde{\mathcal{C}}$ with $\dfrech(c_i^*,\tilde{c_j})\leq \eps O/m$.
	Since $\sum_{g\in G} \min_{j=1}^k \dfrech(g,\tilde{c}_j) \leq \sum_{i=1}^k \sum_{g\in G_i^*} \dfrech(g, \tilde{c}_j) \leq \sum_{i=1}^k \sum_{g\in G_i^*} \dfrech(g, c^*_i) + \dfrech(c^*_i, \tilde{c}_i)\leq  \sum_{g\in G}\min_{i=1}^k \dfrech(g, c^*_i) + \eps O/m = (1+\eps)O$, the set $\widetilde{\mathcal{C}}$ is an $(1+\eps)$-approximation. We will show that there is such a set that uses only vertices of $V$. 
	
	If $k=1$, then  $\dfrech(c_1,c^*_1)= \frac{1}{m}\sum_{g\in G} \dfrech(c_1,c^*_1)\leq \frac{1}{m}\sum_{g\in G} \dfrech(c_1, g) + \dfrech(g, c^*_1) \leq (\Delta+O)/m \leq 2\Delta/m$. Applying Lemma~\ref{lem:grid-search} with $\delta=2\Delta/m$ and $X = \eps\Delta/(12m)\leq \eps O/m$, there is a $(1+\eps)$-approximation using only vertices of $V$.
	
	Otherwise, if $k>1$, then for each $c^*_i$ there is a $c_j$ such that the clusters of these centers share some curve $g\in G$. So, $\dfrech(c^*_i,c_j) \leq \dfrech(c^*_i, g) + \dfrech(g, c_j) \leq O+\Delta\leq 2\Delta$. Applying Lemma~\ref{lem:grid-search} with $\delta=2\Delta$ and $X = \eps\Delta/(12m)\leq \eps O/m$, there is a $(1+\eps)$-approximation using only vertices of $V$.
	
	For the running time, we have $|V| \leq k\ell (\lceil\frac{a}{b}\rceil+1)^d$ when we use grids with width $a$ and resolution $b$. If $k=1$, $\frac{a}{b} = \frac{4\Delta/m}{\eps\Delta/(21m\sqrt{d})} = \frac{84\sqrt{d}}{\eps}$. If $k>1$, $\frac{a}{b} = \frac{84m\sqrt{d}}{\eps}$. The rest of the analysis is similar to that in Theorem~\ref{thm:Frechet-center-approx}.
\end{proof}

The additional $m^{dk\ell}$ factor in our algorithm when $k>1$ is due to the following:
We need to find curves with distance at most $\eps\Delta/(42m)$ to an optimal solution in order to get an $(1+\eps)$-approximation. When $k=1$, the number of grid points is still independent of $m$, since the distance of the curve in $\mathcal{C}$ to an optimal curve is $O(\Delta/m)$. However, when $k>1$, it is possible that all optimal solutions have a cluster of constant size. Then the center curve of that cluster can have distance $\Omega(\Delta)$ to all curves in $\mathcal{C}$. 

\subsection[Exact algorithm for (k,l)-center under discrete Fr\'echet in 2D]{Exact algorithm for $(k,\ell)$-center under discrete Fr\'echet in 2D}

We give an algorithm that solves the $(k,\ell)$-center problem for the discrete Fréchet distance in 2D in polynomial time for fixed $k$ and $\ell$. We first show how to solve the decision version of this problem and use it as a subroutine to solve the optimisation problem.

The main idea of the algorithm for the decision version is based on the following observation: for a given $r$, we have  $\min_{c\in\mathcal{C}}\dfrech(c,g)\leq r$ for all $g\in G$ if and only if each vertex $p$ of a curve in $\mathcal{C}$ lies in the intersection of the disks of radius $r$ around all vertices $q$ from curves in $G$ that $p$ is matched with. Furthermore, it does not matter where the vertex $p$ lies within the intersection region. This means we can select a vertex for each maximal overlapping region (i.e. each region such that the set of disks intersecting the region is not contained in another region) and exhaustively test all sets with $k$ curves of $\ell$ vertices that can be constructed by using only the selected vertices to determine if there exists a set of curves $\mathcal{C}$ such that $\min_{c\in\mathcal{C}}\dfrech(c,g)\leq r$ for all $g\in G$.

\begin{figure}[t]
	\centering
	\includegraphics[scale=0.7]{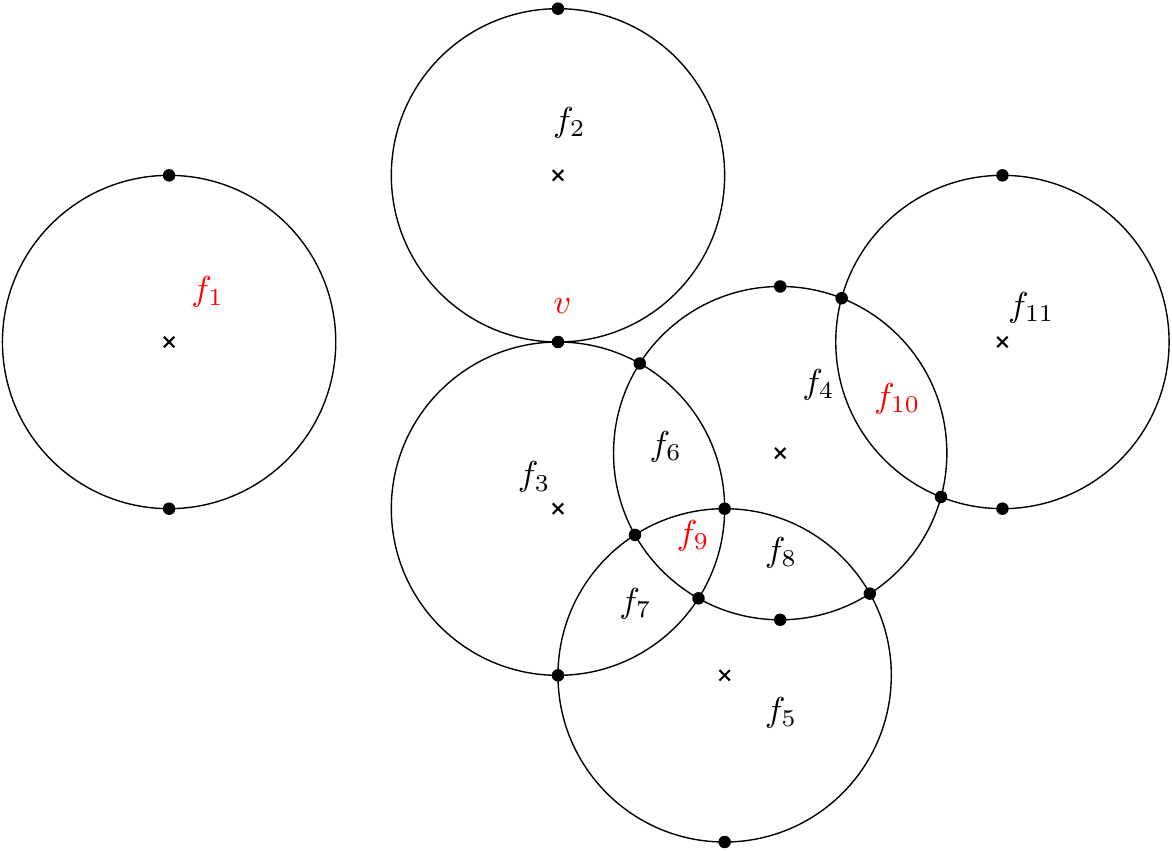}
	\caption{\label{fig:arc-graph-example} An example configuration of $\mathcal{G}=(V,E)$. Crosses indicate the vertices from the curves in $G$, dots indicate vertices from $V$ and all bounded faces are numbered. The maximal intersection regions are the faces $f_1$ and $f_9$ and the vertex $v$ (in red). Note that while all arcs on the boundary of $f_2$ are convex for that face, $f_2$ is not maximal, since its boundary intersects the boundary of $f_3$ only at vertex $v$.}
\end{figure}

To find all maximal intersection regions, we first compute the planar graph $\mathcal{G}=(V,E)$, where $V$ is the set of all intersection points between boundaries of disks centred around a vertex from our input curves with radius $r$ and $E$ is the set of arcs on the boundary of those disks ending at two intersection points. This graph has $O((nm)^2)$ vertices and arcs and can be computed in $O((nm)^2)$ time~\cite{chazelle1986circle}, see Figure~\ref{fig:arc-graph-example} for an example.

By traversing the intersection points and arcs on the boundary, we can find the at most $O((nm)^2)$ maximal intersection regions. So, we test $O((mn)^{2k\ell})$ sets of center curves, for which we can test whether an input curve has discrete Fr\'echet distance less than $r$ to a single curve among the $k$ center curves in $O(m\ell)$. This means the algorithm for the decision version takes $O((mn)^{2k\ell}k\ell m)$ time.

To find a minimum $r$ such that a $(k,\ell)$-center exists, note that we only have to consider the decision problem for those $r$ where the topology of the intersection regions in $\mathcal{G}$ is different. If we start with $r=0$ and gradually increase it, the topology of $\mathcal{G}$ changes only when a new maximal intersection is created, which then consists of exactly one point $p$. This means that there is a subset of our disks such that point $p$ is the earliest point where all disks have a non-empty intersection. So, $p$ must be the center of the minimum enclosing disk for this subset of disks. Since a minimum enclosing disk is determined by at most $3$ points, there can be at most one unique point for every triple in set of vertices of the input curves which give at most $O((mn)^3)$ distinct values of $r$ where the topology of $\mathcal{G}$ changes. By performing a binary search on these values, we can find the optimal value in $O(\log(mn))$ calls to the algorithm for the decision version and we get the following result:

\begin{theorem}\label{thm:Frechet-exact}
	Given a set of $n$ curves $G$ in the plane with at most $m$ vertices each, we can find a solution to the $(k,\ell)$-center problem for the discrete Fréchet distance in $O((mn)^{2k\ell}k\ell m\log (mn))$ time. 
\end{theorem}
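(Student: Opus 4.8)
The plan is to prove the theorem in two stages, following the algorithm sketched above: first I would establish correctness and the running time of the \emph{decision} version (for a fixed radius $r$, decide whether a $(k,\ell)$-center of cost at most $r$ exists), and then reduce the optimization problem to $O(\log(mn))$ decision queries by binary search over a polynomial-size set of candidate radii. The heart of the decision version is a discretization lemma that justifies restricting the vertices of the candidate center curves to a finite set. I would first observe that $\min_{c\in\mathcal{C}}\dfrech(c,g)\le r$ for every $g\in G$ holds precisely when there is an assignment of each $g$ to some center $c\in\mathcal{C}$ together with a warping path for the pair $(c,g)$ such that every center vertex $p$ lies within distance $r$ of all input vertices matched to it, i.e.\ $p\in\bigcap_{q\in Q_p}D_r(q)$, where $Q_p$ is the set of matched input vertices and $D_r(q)$ is the disk of radius $r$ about $q$. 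The key point is that the subset of input disks covering a point is constant over each face of the arrangement $\mathcal{G}=(V,E)$ of the disk boundaries; hence moving any center vertex $p$ to the chosen representative of the (maximal) intersection region containing it leaves it inside exactly the same disks $D_r(q)$, so every matched distance stays at most $r$. This shows that if any feasible solution exists, then one exists whose vertices all come from the $O((mn)^2)$ representative points, which is what the exhaustive enumeration tests; the reverse direction is immediate, since the algorithm accepts only sets whose actual cost it has verified.

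With the discretization lemma in hand, the running-time bound for the decision version follows by counting. The arrangement $\mathcal{G}$ has $O((mn)^2)$ vertices and arcs and is computed in $O((mn)^2)$ time~\cite{chazelle1986circle}; a traversal of its intersection points and arcs yields the $O((mn)^2)$ maximal intersection regions and their representatives; choosing $\ell$ of these for each of the $k$ centers gives $O((mn)^{2k\ell})$ candidate sets; and each candidate is evaluated by computing, for every input curve and every center, the discrete Fr\'echet distance in $O(m\ell)$ time by dynamic programming. This yields the stated $O((mn)^{2k\ell}k\ell m)$ bound for the decision procedure.

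For the optimization version, I would argue that the feasibility of the decision problem is monotone in $r$ (a solution for radius $r$ is also a solution for every $r'\ge r$), so it suffices to locate the threshold radius $r^{*}$, and I would show that $r^{*}$ lies in a precomputable set of $O((mn)^3)$ candidate values. Since the combinatorial structure of the arrangement, and therefore the set of realizable vertex positions, changes only when the topology of the intersection regions changes, and such a change first occurs when a new maximal region appears as a single point $p$, that point $p$ must be the center of the minimum enclosing disk of the subset of input vertices whose disks first acquire a common point there. As a minimum enclosing disk is determined by at most three points, each candidate critical radius is the radius of the minimum enclosing disk of some triple (or pair or singleton) of input vertices, giving $O((mn)^3)$ values. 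Sorting these and binary searching with the decision procedure locates $r^{*}$ using $O(\log(mn))$ calls, for a total of $O((mn)^{2k\ell}k\ell m\log(mn))$ time.

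I expect the main obstacle to be the discretization lemma, specifically making precise that snapping a center vertex to the representative of its face preserves feasibility simultaneously across all $k$ centers and all accompanying warping paths, together with the characterization of the critical radii: one must verify that feasibility can switch from ``no'' to ``yes'' only at a radius where some required intersection of disks first becomes nonempty, and that such an event is always captured by the minimum enclosing disk of at most three input vertices. The remaining counting and running-time arguments are routine.
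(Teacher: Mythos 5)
Your proposal follows essentially the same approach as the paper's proof: the same arrangement-of-disk-boundaries discretization with representatives of the $O((mn)^2)$ intersection regions, the same $O((mn)^{2k\ell})$ enumeration with dynamic-programming verification, and the same binary search over the $O((mn)^3)$ critical radii arising as minimum-enclosing-disk radii of at most three input vertices. Your write-up is in fact slightly more explicit than the paper on two points it leaves implicit, namely the monotonicity of feasibility in $r$ and the face-wise invariance argument justifying the snapping of center vertices.
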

\section{Conclusion}
    In this paper, we have shown that the $1$-median problem is computationally hard under the discrete Fr\'echet, continuous Fr\'echet, and DTW distance. A natural question is whether this problem is hard to approximate. Efficient constant factor approximation algorithms are known for the Fr\'echet distance (see Section~\ref{sec:approx-k-l-median}), but not for DTW. If we extend our analysis in Lemma~\ref{lem:curve-frechet-to-FCCS} to a solution $c^*$ with cost $(1+\eps) r$ for some $\eps> 0$, we can show  $\dfrech(c^*,g) \leq 1 + O(\eps m)$ for all input curves $g$ (where the constant is independent of other input parameters). Together with the approximation lower bound of $2$ for $1$-center under continuous Fr\'echet distance~\cite{Struijs2018Curve}, this implies a lower bound of $1 + \Omega(\frac{1}{m})$ on the approximation factor for $1$-median. If we do the same for Lemma~\ref{lem:pq-dtw-to-FCCS}, we get that it is hard to approximate $1$-median under $(p,q)$-DTW for any factor $< 1 + 2((1+\frac{1}{\min(i,j)})^{q/p}-1)$. So, it remains an open problem to find a constant lower bound for approximating $1$-median for this distance measures. 

On the positive side, we have given $(1+\eps)$-approximation algorithms for $(k,\ell)$-center and $(k,\ell)$-median problems under discrete Fr\'echet in Euclidean space and an exact algorithm for the $(k,\ell)$-center problem under discrete Fr\'echet in 2D that all run in polynomial time for fixed $k,\ell,\eps$. 
It would be interesting to see if these algorithms can be adapted to the DTW or continuous Fr\'echet settings. Our approximation algorithms rely on the fact that good approximations have small distance to some optimal solution and that we can search a bounded space (the set of balls surrounding the vertices) for better approximations. The first property does not hold for DTW, since it is non-metric and the second property does not hold for continuous Fr\'echet, since the vertices of a curve with small continuous Fr\'echet distance do not have to be near the vertices of the other curve. The latter property is also crucial for the exact algorithm.

\bibliographystyle{plainurl}
\bibliography{refs}

\end{document}